\theoremstyle{theorem}
\newtheorem{lemma}{Lemma}[section]
\newtheorem{theorem}[lemma]{Theorem}
\newtheorem{prop}[lemma]{Proposition}
\newtheorem{rem}[lemma]{Remark}
\newtheorem{definition}[lemma]{Definition}
\newcommand\Pp{{\mathbb P}}
\newcommand\cP{{\mathcal P}}
\newcommand\cM{{\mathcal M}}
\newcommand\E{\mathbb{E}}
\newcommand\R{\mathbb{R}}
\newcommand\N{\mathbb{N}}
\newcommand\beqnn{\begin{eqnarray}}
\newcommand\eeqnn{\end{eqnarray}}
\newcommand\be{\begin{equation}}
\newcommand\ee{\end{equation}}
\newcommand\eps{\epsilon}
\journal{Stochastic Processes and their Applications}
\begin{document}

\begin{frontmatter}



\title{How does geographical distance translate into genetic distance? }

  
 \author[label1,label2]{Ver\'onica Mir\'o Pina\corref{cor1}}
 \ead{veronica.miro-pina@college-de-france.fr}
  \author[label1,label2]{Emmanuel Schertzer}
 \address[label1]{Laboratoire de Probabilit\'es, Statistique et Mod\'elisation (LPSM), Sorbonne Universit\'e, CNRS UMR 8001, Case courrier 158, 4, place Jussieu 75252 PARIS Cedex 05}
 \address[label2]{Centre Interdisciplinaire de Recherche en Biologie (CIRB), Coll\`ege de France, CNRS UMR 7241, PSL Research University, 11, place Marcellin Berthelot, 75231 PARIS Cedex 05}
\cortext[cor1]{Corresponding author}

\begin{abstract}
Geographic structure can affect patterns of genetic differentiation and speciation rates.
In this article, we investigate  the dynamics of  genetic distances in a geographically structured metapopulation.
We model the metapopulation as a weighted directed graph, with  $d$ vertices corresponding to $d$ subpopulations that evolve according to an individual based model. The dynamics of the genetic distances is then controlled by two types of transitions -mutation and migration events. We show that, under a rare mutation - rare migration regime, intra subpopulation diversity can be neglected and our model can be approximated by a population based model. We show that under a large population - large number of loci limit, the genetic distance between two subpopulations converges to a deterministic quantity that can asymptotically be expressed in terms of the hitting time between two random walks in the metapopulation graph. Our result shows that the genetic distance between two subpopulations does not only depend on the direct migration rates between them but on the whole metapopulation structure.

\end{abstract}
\begin{keyword}
Genetic distance \sep Metapopulation structure \sep Moran model \sep Graph distance

 \MSC[2008] 60J28 \sep 92D15 \sep 05C12

\end{keyword}

\end{frontmatter}

\section{Introduction}
\label{intro}

\subsection{Genetic distances in structured populations. Speciation} In most species, the geographical range is much larger than the typical dispersal distance of its individuals. 
A species is usually structured into several local subpopulations with limited genetic contact. Because migration only connects neighbouring populations, more often than not, populations can only exchange genes indirectly, by reproducing with one or several intermediary populations. As a consequence, the geographical structure tends to buffer the homogenising effect of migration, and as such, it is considered to be one of the main drivers for the persistence of genetic variability within species (see \cite{malecot} or \cite{ Karlin}).

 The aim of this article is to present some analytical results on the genetic composition of a species emerging from a given geographical structure. The main motivation behind this work is to study speciation.  When two populations accumulate enough genetic differences, they may become reproductively isolated, and therefore considered as different species. As the geographic structure of a species is one of the main drivers for the genetic differentiation between subpopulations, this work should shed light on which are the geographic conditions under which new species can emerge.

 Several authors have studied parapatric speciation, i.e. speciation in the presence of gene flow between subpopulations, for example \citet{gavri1998, Ga2000a, gavri2000} and \citet{Yama,Yama2}.
 In their models, some loci on the chromosome are responsible for reproductive isolation. These loci may be involved in incompatibilities at any level of biological organisation (molecular, physiological, behavioural etc) and either prevent mating (pre-zygotic incompatibilities) or prevent the development of hybrids (post-zygotic incompatibilities).
The number of segregating loci increases through the accumulation of mutations, and decreases after each migration event (creating the opportunity for some gene exchange between the migrants and the host population). 
When the number of segregating loci between two individuals reaches a certain threshold,  they become reproductively incompatible. For example, \citet{Yama, Yama2} studied  the case of a metapopulation containing  two homogeneous subpopulations. The authors studied how the genetic distance, defined as the number of loci differing between the two subpopulations, evolves through time, using a continuous-time model.  When considering metapopulations with more than two subpopulations, this kind of dynamics may translate into complex patterns of speciation. One particularly intriguing example is the case of ring species \citep{noest, gavri1998}, where two neighbouring subpopulations are too different to be able to reproduce with one another but can exchange genes indirectly, by reproducing with a series of intermediate subpopulations that form a geographic `ring'. How these patterns emerge and are maintained is still poorly understood, and we hope that our analytical result might shed some new light on the subject.

\subsection{Population divergence and fitness landscapes}\label{sect:fitness} To study speciation by accumulation of genetic differences, we model the evolution of some loci on the chromosome, that are potentially involved in reproductive incompatibilities. To visualise these evolutionary dynamics, \cite{wright1932} suggested the metaphor of adaptive landscapes. Adaptive landscapes represent individual fitness as a function defined on the genotype space, which is a multi-dimensional space representing all possible genotypes. Wright emphasised the idea of `rugged' adaptive landscapes, with peaks of fitness representing species and valleys representing unfit hybrids. Speciation, seen as a population moving from one peak to another, implies a temporary reduction in fitness, which is not very likely to occur in large populations, where genetic drift is not important enough to counterbalance the effect of selection (see \cite{gavri-holey} for a more detailed discussion). However, \cite{gavri-holey} suggested the idea of `holey' adaptive landscapes, where local fitness maxima can be partitioned into connected sets (called evolutionary ridges). Speciation is therefore seen as a population diffusing across a ridge, by neutral mutation steps, until it stands at the other side of a hole. Theoretical models, such as \cite{gavri1997}, have shown, using percolation theory, that in high-dimensional genotype spaces, fit genotypes are typically connected by evolutionary ridges. 

Our model (see Section \ref{sub-IBD}) is built in this framework. In fact we will assume that, in large populations, deleterious mutations are washed away by selection at the micro-evolutionary timescale and describe the evolutionary dynamics for our set of incompatibility controlling loci as {\it neutral} (any genotype on the evolutionary ridge can be accessed by single mutation neutral steps). This is the idea behind the description of our model in Section \ref{sub-IBD}.

Further, we consider that the evolutionary dynamics along the ridge are slow (as random mutations are very likely to be deleterious, mutations along the evolutionary ridge are assumed to be rarer than in the typical population genetics framework), which is why we study our model in a low mutation - low migration regime (see Section \ref{sect:scaling} for more details). This assumption is commonly made when studying speciation, for example in \cite{Ga2000a} or \cite{Yama}.

\subsection{An individual based model (IBM)} \label{sub-IBD}We model the metapopulation as a weighted directed graph with $d$ vertices, corresponding to the different subpopulations. Each directed edge $(i,j)$ is equipped with a migration rate in each direction. (In particular, if two subpopulations are not connected, we assume that the migration rates are equal to 0.) 
We assume the existence of two scaling parameters, $\gamma$ and $\eps$, that will converge to $0$ successively (first $\gamma \to 0$ and then $\eps \to 0$, see Section \ref{sect:scaling} for more details).

Each subpopulation consists of $n_i^{\eps}$ individuals, $i \in E:= \{1, \dots, d\}$. Each individual carries a single chromosome of length 1, which contains $l^{\eps}$  loci of interest (that are involved in reproductive incompatibilities). We assume that the vector of  positions for those loci -- denoted by ${\cal L}^{\eps} = \{x_{1}, \dots, x_{l^{\eps}}\}$  -- is obtained by throwing $l^\eps$ uniform random variables on $[0,1]$. The positions are chosen randomly at time 0, but are the same for all individuals and do not change through time. Recall that the upper indices (such as in $l^{\eps}$ and ${\cal L}^{\eps}$) are indices and not exponents).
 
Conditioned on ${\cal L}^\eps$, each subpopulation then evolves according to an haploid neutral Moran model with recombination.
\begin{itemize}
\item Each individual $x$ reproduces at constant rate $1$ and chooses a random partner $y$ ($y \ne x$). Upon reproduction, their offspring replaces a randomly chosen individual in the population. 
\item The new individual inherits a chromosome which is a mixture of the parental chromosomes. Both parental chromosomes are cut into fragments in the following way: we assume a Poisson Point Process of intensity $\lambda$ on $[0,1]$. Two loci belong to the same fragment iff there is no atom of the Poisson Point Process between them. For each fragment, the offspring inherits the fragment of one of the two parents chosen randomly.\end{itemize}

 To our Moran model we add two other types of events:
\begin{itemize}
\item \textbf{Mutation} occurs at rate $ b^{\gamma, \eps}$ per individual, per locus according to an infinite  allele model.
\item \textbf{Migration} from subpopulation $i$ to subpopulation $j$ occurs at rate $ m^{\gamma}_{ij}$. At each migration event, one individual migrates from subpopulation $i$ to  $j$, and replaces one  individual chosen uniformly at random in the resident population. (We set $\forall i \in E, \ m^{\gamma}_{ii} = 0$.)

\end{itemize}

We define the genetic distance between two individuals $x$ and $y$ at time $t$ as:
\begin{equation*}
\delta^{\gamma, \eps}_t(x,y) = \frac1{l^\eps} \#\{\ k \in \{1, \dots, l^{\eps}\} : \textrm{ $x$ and $y$ differ at locus $k$ }\}.
\end{equation*}

Consider two subpopulations $i$ and $j$ and let $\{i_1, \dots, i_{n^\eps_i} \}$ be the individuals in population $i$  and $\{j_1, \dots, j_{n^\eps_j} \}$ the individuals in population $j$. The genetic distance between subpopulations $i$ and $j$ at time $t$ is defined as follows:
\begin{eqnarray}\label{def:genetic-distance}
d^{\eps, \gamma}_t(i,j) = \left (\frac1{n_i^{\eps}} \sum_{x \in \{i_1, \dots, i_{n^\eps_i}\}} \min_{y \in \{j_1, \dots, j_{n^\eps_j}\}} \delta_t^{\gamma, \eps}(x,y) \right)  \vee \left ( \frac1{n_j^{\eps}} \sum_{y \in \{j_1, \dots, j_{n^\eps_j}\}} \min_{x \in \{i_1, \dots, i_{n^\eps_i}\}} \delta_t^{\gamma, \eps}(x,y) \right ).
\end{eqnarray}
This corresponds to the so-called modified Haussdorff distance between subpopulations, as introduced by \cite{dubuisson1994modified}. (This distance has the advantage of averaging over the individuals in each subpopulation, so  introducing a single mutant or migrant would produce a smooth variation in the genetic distances.)

\subsection{Slow mutation--migration and large population--dense site regime.}\label{sect:scaling} 
In this section, we start by describing in more details the slow mutation--migration regime alluded to in Sections \ref{sect:fitness} and \ref{sub-IBD}.

It is well known that in the absence of mutation and migration, the neutral Moran model describing the dynamics at the local level reaches fixation in finite time i.e. after a finite amount of time the population becomes homogeneous.
The average time to fixation for a single locus is of the order of the size of the subpopulation \citep{Kimura1, Kimura2} (In our multi-locus model, it will also depend on the number of loci and on the recombination rate $\lambda$.) Heuristically, if we assume a low mutation - low migration regime, i.e. that
\begin{equation}
\forall i, j \in E, \ \ \ \frac1{b^{\gamma, \eps} n^{\eps}_i}, \  \frac1{m_{ij}^{\gamma}} \ \gg \ n^{\eps}_j, \  l^{\eps} \gg 1,
\label{order-parameters} 
\end{equation}
the average time between two migration events ($1/{m_{ij}^{\gamma}} $), and the average time between two successive mutations at a given locus ($1/({b^{\gamma, \eps}n_j^{\eps}})$) are much larger than the average time to fixation.
This ensures that the fixation process is fast compared to the time-scale of mutation and migration, and, as a result, when looking at a randomly chosen locus, subpopulations are homogeneous except for short periods of time right after a migration event or a mutation event. 
This suggests that if we accelerate time properly, we can neglect intra-subpopulation diversity and approximate our model by a population based model.

Inspired by these heuristics, we are going to take a low mutation - low migration regime, by making the mutation and migration rates depend on the scaling parameter $\gamma$ in the following way:
\begin{eqnarray*}
 {m_{ij}^{\gamma}}  &=&\gamma M_{ij} \ \textrm{ where $M_{ij} \ge 0$ is a constant }  \\
{b^{\gamma, \epsilon}}  &=&  \gamma  \eps \  b_{\infty}  \ \  \textrm{ where $b_{\infty}>0$ is a constant.}
\end{eqnarray*}

Recall that we take a slow mutation-migration regime but the recombination rate is constant. This is consistent with the fact that in most species, mutation rates are very low (they vary from $10^{-6}$ to $10^{-8}$ per base per generation) compared to the recombination rates (for example, for the human genome there are approximatively 66 crossovers per generation). 

\bigskip 

In a second step, we will make an additional approximation: we will consider a large population - dense site limit. In fact,  our second scaling parameter $\epsilon$, corresponds to the inverse of a typical subpopulation size. The parameters of the model depend on $\eps$ in the following way (corresponding to the second inequality in (\ref{order-parameters}): 
\begin{eqnarray*}
&& n_{i}^{\epsilon} = [N_{i}/\epsilon]   \ \textrm{ \ where $N_i>0$ remains constant as $\eps \to 0$}\\
&& l^{\epsilon} \to \infty \  \textrm{ as $\eps \to 0$}  
\end{eqnarray*}

In this article, we are going to take the limits successively: first $\gamma \to 0$ and then $\eps \to 0$, in order to be consistent  with the informal inequality \eqref{order-parameters}. We are now ready to state the main result of this paper.
\begin{theorem}
For each pair of subpopulations $i, j \in E$, let $S^i$ and $S^j$ be two independent random walks on $E$ starting respectively from $i$ and $j$ and whose transition rate from $k$ to $p$ is equal to $\tilde M_{kp}:= M_{pk}/N_k$. 
Finally, define $D_t(i,j)$ as 
\begin{equation*}
\forall t\ge 0, \ \ D_t(i,j) \ = \ 1 \ - \  \int_0^t e^{-2b_\infty s} \Pp(\tau_{ij} \in ds) \ - \ e^{-2b_\infty t} \Pp\left(\tau_{ij} > t \right),  \
\end{equation*}
where 
$\tau_{ij}=\inf\{t \geq0 \ : \ S^i(t) = S^j(t)\}.$
 
If at time $0$ the metapopulation is homogeneous (i.e. all the individuals in all subpopulations share the same genotype)
then
\begin{equation*}
\lim_{\eps \to 0} \ \lim_{\gamma \to 0} (d^{\gamma, \eps}_{t/(\gamma \eps)}(i,j), \ t \ge 0) \ = \ (D_t(i,j), \ t\ge 0)\  \textrm{ in the sense of finite dimensional distributions (f.d.d.).}
\end{equation*}
In particular, 
\begin{equation*}
\lim_{t \to \infty} D_t(i,j) \ = \ 1 - \E(e^{-2 b_\infty \tau_{ij}}). 
\end{equation*}
\label{thm-intro}
\end{theorem}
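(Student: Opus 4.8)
The plan is to establish the two limits in the prescribed order, treating each as a separate scaling limit.

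\emph{Stage 1 (the limit $\gamma\to0$).} First I would show that, on the accelerated time scale $t/(\gamma\eps)$ and with $\eps$ held fixed, the individual-based model converges to a population-based model in which every subpopulation is monomorphic at every locus. The engine is a separation-of-time-scales (averaging) argument: on this clock the within-subpopulation Moran dynamics reaches fixation on a time scale of order $\gamma$, whereas migration and mutation events occur at rates of order $1$. Hence, as $\gamma\to0$, fixation becomes instantaneous relative to migration and mutation, so that at all but a negligible set of times each subpopulation is homogeneous and the modified Hausdorff distance $d^{\gamma,\eps}_{t/(\gamma\eps)}(i,j)$ collapses to the genotype distance $\delta$ between the (monomorphic) genotypes of $i$ and $j$. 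Because the model is neutral, at the level of a single locus a lone migrant (resp. a lone new mutant) entering a subpopulation of size $n_b^\eps$ fixes with probability $1/n_b^\eps$, independently of the recombination rate $\lambda$. Multiplying the migration rate $\gamma M_{ab}$ (resp. the per-individual mutation rate $\gamma\eps b_\infty$ summed over the $n_b^\eps$ residents) by this fixation probability and rescaling time by $1/(\gamma\eps)$ yields, in the limit, a population-based model in which, at each locus, the genotype of subpopulation $b$ is overwritten by that of $a$ at rate $M_{ab}/(\eps[N_b/\eps])\to M_{ab}/N_b$, and is replaced by a brand-new allele at rate $b_\infty$.

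\emph{Stage 2a (single-locus genealogy under $\eps\to0$).} Fix one locus. After Stage 1 its allelic configuration across $E$ is a Markov chain whose backward genealogy I would describe explicitly. Tracing the ancestry of the allele carried by subpopulation $i$, an overwriting event $a\to b$ makes the lineage jump from $b$ to $a$ at rate $M_{ab}/N_b=\tilde M_{ba}$, which is precisely the random walk $S^i$. Two lineages issued from $i$ and $j$ evolve independently -- being driven by disjoint families of migration events as long as they sit at distinct vertices -- until the first time they occupy the same vertex, which is exactly the meeting time $\tau_{ij}$ of two independent copies $S^i,S^j$; at that instant they have coalesced, since a subpopulation is monomorphic. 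Under the infinite-allele model the two subpopulations carry the same allele iff no mutation falls on the two branches joining them to their most recent common ancestor. Mutations land on each lineage at rate $b_\infty$, so when $\tau_{ij}\le t$ the no-mutation probability on two branches of length $\tau_{ij}$ is $e^{-2b_\infty\tau_{ij}}$; when $\tau_{ij}>t$ the two lineages reach time $0$ without meeting, where homogeneity of the initial condition makes them identical, so the identity probability is $e^{-2b_\infty t}$. Summing the two cases gives $\Pp(\text{same})=\int_0^t e^{-2b_\infty s}\Pp(\tau_{ij}\in ds)+e^{-2b_\infty t}\Pp(\tau_{ij}>t)$, whence the single-locus probability of differing equals $D_t(i,j)$.

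\emph{Stage 2b (law of large numbers over loci) and the long-time limit.} It remains to upgrade the single-locus computation to the empirical distance $d^\eps_t(i,j)=\frac1{l^\eps}\sum_{k=1}^{l^\eps}\mathbf{1}\{i,j\text{ differ at locus }k\}$. Since the loci positions are i.i.d.\ uniform the loci are exchangeable, so $\E[d^\eps_t(i,j)]$ equals the single-locus differing probability, which converges to $D_t(i,j)$ as $\eps\to0$. To obtain convergence of the process itself I would show the variance vanishes as $l^\eps\to\infty$: the diagonal contributes $O(1/l^\eps)$, while the off-diagonal covariances between two loci decay with their genetic distance because recombination at the fixed positive rate $\lambda$ breaks up linkage, so the fraction of strongly correlated pairs tends to $0$. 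This is where $\lambda$ is needed, even though it disappears from the limit $D_t(i,j)$. As the limit is deterministic, convergence in probability of the one-dimensional marginals extends to the finite-dimensional distributions. Finally, letting $t\to\infty$, the term $e^{-2b_\infty t}\Pp(\tau_{ij}>t)$ vanishes and the integral increases to $\E[e^{-2b_\infty\tau_{ij}}\mathbf{1}\{\tau_{ij}<\infty\}]=\E[e^{-2b_\infty\tau_{ij}}]$ (with the convention $e^{-\infty}=0$), giving $\lim_{t\to\infty}D_t(i,j)=1-\E(e^{-2b_\infty\tau_{ij}})$.

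\emph{Main obstacle.} The delicate step is the rigorous averaging in Stage 1. One must control the multi-locus fixation dynamics with recombination during the fast phase, prove that intra-subpopulation diversity is asymptotically negligible \emph{uniformly in time} on compact intervals, and rule out migration or mutation events accumulating before fixation completes; characterising the limiting population-based process (for instance through a martingale problem or an explicit coupling) and establishing convergence of the whole trajectory rather than of fixed-time marginals is the technical heart of the argument.
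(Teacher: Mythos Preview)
Your proposal is correct in its overall architecture and matches the paper's two-stage strategy. A few points of comparison are worth noting.

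The paper's execution of Stage~2 differs from yours in that it introduces an intermediate object, the \emph{genetic partition probability measure} $\xi^\eps_t = \frac{1}{l^\eps}\sum_k \delta_{\Pi_k(t)}$ on the set of partitions of $E$, and proves that this measure-valued process converges (in Skorokhod topology) to the deterministic solution of the forward Kolmogorov equation for the single-locus partition chain. The argument is a martingale-problem one: the drift is identified with $^tG^\eps$ and the quadratic variation is shown to vanish. Your direct mean-plus-variance approach on the indicators $\mathbf{1}\{i\not\sim j \text{ at locus }k\}$ is essentially a projection of this and would work, but the partition-measure route yields more (the full law of the allelic partition, not just pairwise distances) and makes the duality with coalescing walks cleaner. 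Either way, the crux of the variance/quadratic-variation bound is the same two-locus computation via the ancestral recombination graph, which the paper isolates as a lemma: $\Pp(k,k'\in \mathcal{F}_j^{L^\eps,\lambda}) = \frac{2}{(n_j^\eps)^2}\big(r^{L^\eps,\lambda}_{k,k'} + \tfrac{2}{n_j^\eps}\big)^{-1}$, and after averaging over the uniform loci positions the normalised double sum is $O(\eps\log(1/\eps))$. Your phrase ``the fraction of strongly correlated pairs tends to $0$'' is the right intuition for exactly this bound.

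One recalibration: you flag Stage~1 as the main obstacle, but the paper dispatches it with a short coupling argument (on the event that each mutation/migration event homogenises before the next one arrives, the IBM and the PBM can be coupled exactly, and this event has probability tending to~$1$ as $\gamma\to 0$). The genuine technical weight lies in Stage~2b---the ARG estimate above together with tightness of $\xi^\eps$---so your sketch is thinnest precisely where the paper does the most work. Note also that in the PBM the dynamics across loci are \emph{jointly} described by fixing a random set $S\sim\mathcal{F}_j^{L^\eps,\lambda}$ at each migration event; your Stage~1 writes down only the single-locus marginal, which suffices for the mean but you must retain the joint structure when bounding covariances.
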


This result can be seen as a law of large numbers over the chromosome. Although the loci are linked  and they do not fix independently (and the recombination rate is constant), when considering a large number of them, they become decorrelated, regardless of the value of $\lambda$. (Note that the limiting process does not depend on $\lambda$.) The model behaves as if infinitely many loci evolved independently according to a Moran model with inhomogeneous reproduction rates (see Remark \ref{rem:33}).
The expression of the genetic distances has then a natural genealogical interpretation. $S^i$ and $S^j$ can be interpreted as the ancestral lineages starting from $i$ and $j$, and our genetic distance is related to the probability that those lines meet before experiencing a mutation (or in other words, that $i$ and $j$ are Identical By Descent (IBD)). 

\begin{rem}
In Theorem \ref{thm-intro}, we considered a rather restrictive initial condition.
In Section \ref{sect:33}, we give a stronger version of this theorem, which works for a larger range of initial conditions, but that requires to introduce several cumbersome notations. 
\end{rem}

\subsection{Consequences of our result} \label{consequence}

One interesting consequence of our result is that the genetic distance does not coincide with the classical graph distance, but instead it depends on all possible paths between $i$ and $j$ in the graph, and all the migration rates (and not only the shortest path and the direct migration rates $M_{ij}$ and $M_{ji}$), i.e., it does not only depend on the direct gene flow between $i$ and $j$ but on the whole metapopulation structure.  In particular, this suggests that  adding new subpopulations to the graph (which would correspond to colonisation of new demes), removing any edge (which could correspond to the emergence of a geographical or reproductive barrier between two subpopulations), or changing any migration rate (which could correspond to  modifying the habitat structure, for example) can potentially modify the whole genetic structure of  the population.

One striking illustration of the previous discussion is presented in Section \ref{sec:bottleneck}, where we consider an example where a geographic bottleneck is dramatically amplified in our new metric. See Figure \ref{bottleneck} and Theorem \ref{thm:ex} for a more precise statement. If we consider, as in \cite{Yama}, that two populations are different species if their genetic distance reaches a certain threshold, that will mean that this metapopulation structure promotes the emergence of two different species, each one corresponding to the population in one subgraph. Very often, parapatric speciation is believed to occur only in the presence of reduced gene flow.
Our example shows that in the presence of a geographic bottleneck, genetic differentiation is manly driven by the geographical structure of the population, i.e., even if the gene flow between two neighbouring subpopulations is approximatively identical in the graph, the genetic distance is dramatically amplified at the bottleneck (see Figure \ref{bottleneck}).

\begin{figure}
\begin{center}
\subfigure[Geographic distances]{ \includegraphics[height=2.8cm]{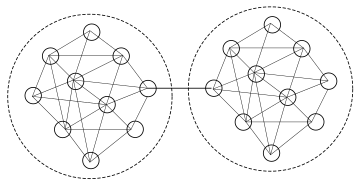}} \\
  \subfigure[Genetic distances]{\includegraphics[height=3cm]{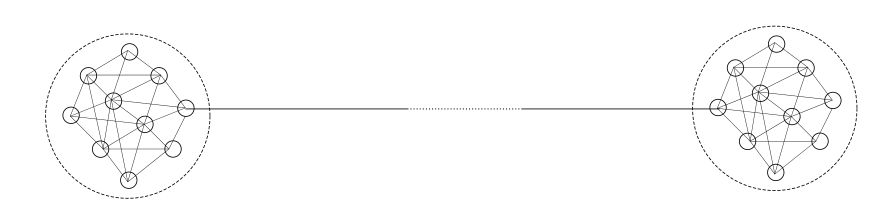}}
\caption{Amplification of a geographic bottleneck in the genetic distance metrics (small value of $c$ in Theorem \ref{thm:ex}). In this example, the metapopulation is formed of two complete graphs (not all edges are represented), connected by a single edge (a).  If $i$ and $j$ are connected, $M_{ij} = 1/d$. In (a) all the edges are the same length. In (b), the genetic distances between pairs of vertices belonging to the same subgraph are smaller than the genetic distances between pairs of vertices belonging to different subgraphs.}
\label{bottleneck}       
\end{center}
\end{figure}

\bigskip

We note that using the hitting time of random walks as a metric on graphs is not new, and has been a popular tool in graph analysis (see  \cite{doyle} and \cite{klein}). For example, the commute distance, which is the time it takes a random walk to travel from vertex $i$ to $j$ and back, is commonly used in many fields such as machine learning \citep{von2014hitting}, clustering \citep{yen}, social network analysis \citep{liben}, image processing \citep{qiu} or drug design \citep{ivanciuc, roy}.
In our case the genetic distance is given by the Laplace transform of the hitting time between two random walks, which was already suggested as a metric on graphs by \cite{Hashimoto}. In that paper the authors claimed that this metrics preserves the cluster structure of the graph. In the example alluded to above (Section \ref{sec:bottleneck}), we found that our metric reinforces the cluster structure of the metapopulation graph. In other words, a clustered geographic structure tends to increase genetic differentiation.

\subsection{Discussion and open problems}
As already mentioned above, the main result is obtained by: (i) proving that, in a low mutation - low migration regime (i.e., when $\gamma \to 0$), subpopulations are monomorphic most of the time and our individual based model converges to a population based model, (ii) showing that, under a large population - dense site limit (i.e. taking $\eps \to 0$), the genetic distances between subpopulations (for the population based model) converge to a deterministic process (defined in Theorem \ref{thm-intro}). 
Taking these two limits successively gives no clue on how the parameters should be compared to ensure the approximation to be correct. It would be interesting to take the limits simultaneously but it is technically challenging (for example we would need to characterise the time to fixation for $l$ loci that do not fix independently, which is not easy).

As discussed in the previous paragraph, we can only show our results under some rather drastic constraints: subpopulations are asymptotically monomorphic.
More generally, we believe that Theorem \ref{thm-intro} should hold under relaxed assumptions, namely when the intra-subpopulation genetic diversity is low compared to the inter-subpopulation diversity (see Figure \ref{simu} for an example, where $\gamma = 2e^{-6}$ and $\eps = 5e^{-3}$ ). Technically, this would correspond to the condition that at a typical locus (i.e, a locus chosen uniformly at random) each subpopulations is monomorphic at that site with high probability (which is in essence  \eqref{order-parameters}). Of course, proving such a result would be much more challenging, but would presumably correspond to a more realistic situation.

\begin{figure}
\begin{center}
\includegraphics[width=10cm]{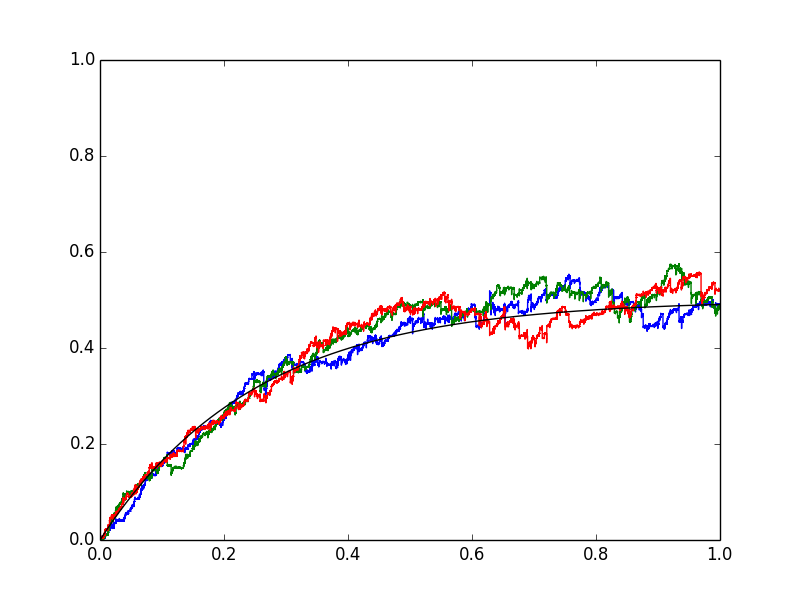}
\caption{Simulation of the individual based model, for $d = 3$, $N_1 = N_2 = N_3 = 1$, $\eps$ = 0.005, $\gamma = 2e^{-6}$, $l^\eps = 100$, $\lambda = 10$.
 The black curve corresponds to $D_t(i,j)$ (see Theorem \ref{thm-intro}). The blue, green and red curves correspond to the three genetic distances $d_t^{\eps, \gamma}(1,2), d_t^{\eps, \gamma}(2,3), d_t^{\eps, \gamma}(1,3)$.}
\label{simu}
\end{center}
\end{figure}

\subsection{Outline}  In Section \ref{pbm}, we show that in the rare mutation-rare migration regime (i.e. when $\gamma \to 0$ whereas $\eps$ remains constant), the individual based model (IBM) described above converges to a  population based model (PBM) (see Theorem \ref{IBMtoPBM}). 
This PBM is a  generalization of the model proposed by \citet{Yama, Yama2} in three ways. First, it is an extension of their model from two to an arbitrary number of subpopulations, which is not trivial from a mathematical point of view. 
Second, in \cite{Yama, Yama2}, the authors only assumed that the migrant alleles are fixed independently at every locus. To make the model more realistic, we took into account genetic linkage, which introduces a non-trivial spatial correlation between loci (along the chromosome). Finally, we suppose that the loci are distributed randomly along the chromosome (and not in a regular fashion). Section \ref{pbm} is interesting on its own since it provides a theoretical justification of the model proposed by \citet{Yama, Yama2}.

In Section \ref{mainresult} and \ref{sect:proof}, we study the PBM in the large population - dense site limit (i.e. when $\eps \to 0$). We properly introduce the main tool used to study the population based model -- the genetic partition probability measure -- and show an ergodic theorem related to this process (see Theorem \ref{thm31}).

Finally, in Section \ref{sect:33} we prove our main result (Theorem \ref{thm-main2} which is an extension of Theorem \ref{thm-intro})  by combining the results of the previous sections.

Section \ref{sec:bottleneck} proves the result  related to the geographical bottleneck alluded to in Section \ref{consequence} (see Proposition \ref{sec:bottleneck}).

\section{Approximation by a population based model}
\label{pbm}

\label{popbased}
We now describe  a population based model (PBM) that can be seen as the limit of the IBM presented above, when $\gamma$ goes to $0$ (whereas $\eps$ remains fixed) and time is rescaled by $1/(\gamma\eps)$. 
 Consider a metapopulation where the individuals are characterised by a finite set of loci, whose positions are distributed as $l^\eps$ uniform random variables on $[0,1]$, and let ${\cal L}^\eps$ be the vector of the positions of the loci  (as described in Section \ref{sub-IBD} for the IBM).
We  now describe the dynamics of the model, conditional on ${\cal L}^\eps = L^\eps$, with $L^\eps \in [0,1]^{l^\eps}$.

Before going into the description of our model, we start with a definition.
It is well known that the Moran model  reaches fixation in finite time, i.e., after a (random) finite time, every individual in the population carries the same genetic material, and from that time on, the system remains trapped in this configuration (see \cite{Kimura1}, \cite{Kimura2}). 
 
 \begin{definition}\label{def-F}
Consider a single population of size $n_j^\eps$ formed by a mutant individual (the migrant) and $n^{\eps}_j-1$ residents, that evolves according to a Moran model with recombination at rate $\lambda$ (as described in Section \ref{sub-IBD}). We define ${\cal F}^{L^\eps, \lambda}_{j}$ as the (random) set of loci carrying the mutant type when the population becomes homogeneous. (Note that ${\cal F}^{L^\eps, \lambda}_{j}$ is potentially empty.)
\end{definition}

We are now ready to describe our PBM. We represent each subpopulation as a single chromosome, which is itself represented by the set of loci $L^\eps$.  The dynamics of the population can then be described as follows.
\begin{itemize}
\item For every $i\in E$: fix a new mutation in population $i$ at rate $b_\infty l^\eps$, the locus being chosen uniformly at random along the chromosome.
\item For every $i,j\in E$ and every $S\subseteq\{1,\dots,l^\eps\}$: at every locus in $S$, fix simultaneously  the alleles from population $i$ in population $j$ at rate $\frac1{\eps} M_{ij} {\mathbb P}\left({\cal F}^{L^\eps,  \lambda}_{j}= S\right)$.
\end{itemize}

In the PBM (parametrised by $\eps$), we define the genetic distance between subpopulations $i$ and $j$ at time $t$ as follows:
$$d^{\epsilon}_t({i,j}) \ = \ \frac{1}{l^\eps}\#\{ \ k\in\{1,\dots,l^\eps\} \ : \ \mbox{subpopulations $i$ and $j$ differ at locus $k$ at time $t$}  \ \}$$
as opposed to $d^{\gamma,\eps}$  which will refer to the genetic distances in the IBM as described in Section \ref{sub-IBD} (parametrised by $\gamma$ and $\eps$).
We note that the definition of the genetic distance in the PBM is consistent with the one in the IBM (see (\ref{def:genetic-distance})) in the sense that if the subpopulations are homogeneous in the IBM, (\ref{def:genetic-distance}) is equal to the RHS of the previous equation.
We are now ready to state the main result of this section.

\begin{theorem}
Assume that, at time $0$, the subpopulations in the IBM are homogeneous and that $\forall i,j \in E$, $d^{\gamma, \eps}_0(i,j) = d^{ \eps}_0  (i,j)$. 
Then, for every $k \in \N$, $\forall \  0 \leq t_1<\dots<t_k$, 
\begin{equation}
\lim \limits_{\gamma \to 0}  \ (d^{\gamma, \eps}_{t_1/(\gamma\eps)}, \dots, d^{\gamma, \eps}_{t_k/({\gamma \eps})})  \ = \ (d^{ \eps}_{t_1}, \dots, d^{ \eps}_{t_k}) \ \textrm{ in distribution.}
\label{couplage}
\end{equation}
\label{IBMtoPBM}
\end{theorem}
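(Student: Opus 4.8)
The plan is to exploit the separation of time scales between the fast Moran (fixation) dynamics and the rare mutation/migration events, and then to build a coupling under which the rescaled IBM and the PBM coincide off an event whose probability vanishes as $\gamma\to0$. Throughout, $\eps$ is fixed, so the local state space is finite and the fixation time $T_{\mathrm{fix}}$ of the local Moran model (with $n^\eps_j$ individuals, $l^\eps$ loci and recombination rate $\lambda$) has finite expectation depending only on $\eps$: writing $T_{\mathrm{fix}}\le\sum_{k=1}^{l^\eps}T_{\mathrm{fix},k}$, where $T_{\mathrm{fix},k}$ is the first time locus $k$ is monomorphic, each single-locus absorption time has finite mean, since the per-locus allele count is a bounded martingale with non-degenerate increments (under recombination the offspring inherits a given focal locus from either parent with probability $1/2$, so the count is a martingale regardless of $\lambda$). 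In particular a single mutant fixes at a given locus with probability exactly $1/n^\eps_i$.

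First I would set up the rescaled rates. After speeding up time by $1/(\gamma\eps)$, mutation occurs in subpopulation $i$ at total rate $b_\infty n^\eps_i l^\eps$ and migration from $i$ to $j$ at rate $M_{ij}/\eps$, so the global event rate is a constant $R_\eps=O(1/\eps)$ independent of $\gamma$, while the rescaled fixation time $\gamma\eps\,T_{\mathrm{fix}}\to0$. Let $0=\sigma_0<\sigma_1<\dots$ be the successive event times; since mutation and migration rates do not depend on the configuration, these form a homogeneous Poisson process, with i.i.d. $\mathrm{Exp}(R_\eps)$ gaps independent of the Moran reproduction clocks. Hence the probability that the fixation triggered by one event fails to complete before the next event is $\E[1-e^{-R_\eps\gamma\eps T_{\mathrm{fix}}}]\le R_\eps\gamma\eps\,\E[T_{\mathrm{fix}}]=O(\gamma)$. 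On a finite horizon $[0,T]$ the number of events is $O(1)$ in probability (Poisson with mean $R_\eps T$), so a union bound gives $\Pp(G_\gamma)\to1$, where $G_\gamma$ is the event that every fixation completes before the next event; the same estimate bounds the total length of the polymorphic periods in $[0,T]$ by $O(\gamma)$, so each fixed query time $t_m$ avoids these periods with probability tending to $1$.

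Next I would identify the limiting transitions, which are exact on $G_\gamma$. Starting from a monomorphic configuration, a mutation event in $i$ places one mutant at a uniform locus; by the martingale argument it fixes with probability $1/n^\eps_i$ and is otherwise erased, so the effective rate of allele-changing fixations in $i$ is $b_\infty n^\eps_i l^\eps\cdot(1/n^\eps_i)=b_\infty l^\eps$ with uniform locus, which is exactly the PBM mutation mechanism. A migration from $i$ to $j$ introduces one individual of type $i$ into the (monomorphic) subpopulation $j$; by Definition~\ref{def-F} the set of loci carrying the migrant type at the ensuing fixation is distributed as $\mathcal F^{L^\eps,\lambda}_j$, so the alleles of $i$ replace those of $j$ precisely on $S\sim\mathcal F^{L^\eps,\lambda}_j$ at rate $M_{ij}/\eps$, which is exactly the PBM migration mechanism. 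I would then construct the coupled PBM by reading off these post-fixation consensus jumps -- thinning the mutation events to the successful ones and using the IBM fixation outcome as the migration set $S$ -- resampling only on the vanishing-probability bad events so that the coupled process has the exact PBM law.

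Finally I would conclude. On $G_\gamma$ intersected with the event that no $t_m$ lies in a fixation period -- of probability tending to $1$ -- every subpopulation is monomorphic at each $t_m$ with consensus type equal to that of the coupled PBM. When two subpopulations are monomorphic, the modified Hausdorff distance \eqref{def:genetic-distance} collapses to the fraction of loci at which their consensus types differ, i.e. to $d^\eps_{t_m}(i,j)$; hence $(d^{\gamma,\eps}_{t_m/(\gamma\eps)})_{m\le k}$ and $(d^\eps_{t_m})_{m\le k}$ coincide on this event, giving convergence in probability and a fortiori in distribution. I expect the main obstacle to be the migration step: controlling, uniformly enough, that no second event perturbs the fixation period so that the outcome law is genuinely $\mathcal F^{L^\eps,\lambda}_j$ with its full inter-locus (linkage) correlations, and making the thinning/resampling coupling measurable and exact. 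The bound $\E[T_{\mathrm{fix}}]<\infty$ is what renders all the error terms $O(\gamma)$.
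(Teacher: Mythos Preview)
Your proposal is correct and follows essentially the same strategy as the paper: rescale time by $1/(\gamma\eps)$ so that mutation/migration events form a Poisson process with $\gamma$-independent rate, define a good event (your $G_\gamma$, the paper's $\mathcal E^{\gamma,\eps}$) on which homogenization periods never overlap, and identify the post-fixation transitions with the PBM rules (thinning mutations by $1/n_i^\eps$ to get effective rate $b_\infty l^\eps$, and using $\mathcal F_j^{L^\eps,\lambda}$ for migration). The only cosmetic difference is that the paper introduces an explicit time-change $T_t^{\gamma,\eps}$ that excises the homogenization periods and then compares the time-changed process to the PBM, whereas you build the coupled PBM directly from the IBM's fixation outcomes and handle the query times $t_m$ separately; both arguments rest on the same estimate $\Pp(G_\gamma)\to1$ and the same identification of limiting transition rates.
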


\begin{proof}
Recall that the loci are distributed randomly along the chromosome. In the proof, we assume that the vector of the positions of the loci ${\mathcal L}^\eps$ is fixed and equals to $L^\eps \in [0,1]^{l^\eps}$ (and is the same in the IBM and in the PBM). We also consider that IBM and the PBM start from the same deterministic initial condition.
The unconditional extension of the proof can be easily deduced from there.

We define a coupling between the IBM and a new PBM that is close (in distribution) to the PBM defined at the beginning of this section.  The idea behind the coupling is that, when time is accelerated by $1/(\gamma \eps)$, and $\gamma$ is small,  in the IBM, the time to fixation after a mutation or migration event is short enough so that the population has become homogeneous before the next  mutation or migration event takes place. Then, we can decompose the trajectories of the IBM into periods where the population is homogeneous (and waits for the next mutation or migration event to take place) and homogenization phases (where the dynamics of the population is described by a Moran model). See Figure \ref{coup} for an illustration of this concept.

\begin{figure}
\begin{center}
\subfigure[]{\includegraphics[width=8cm]{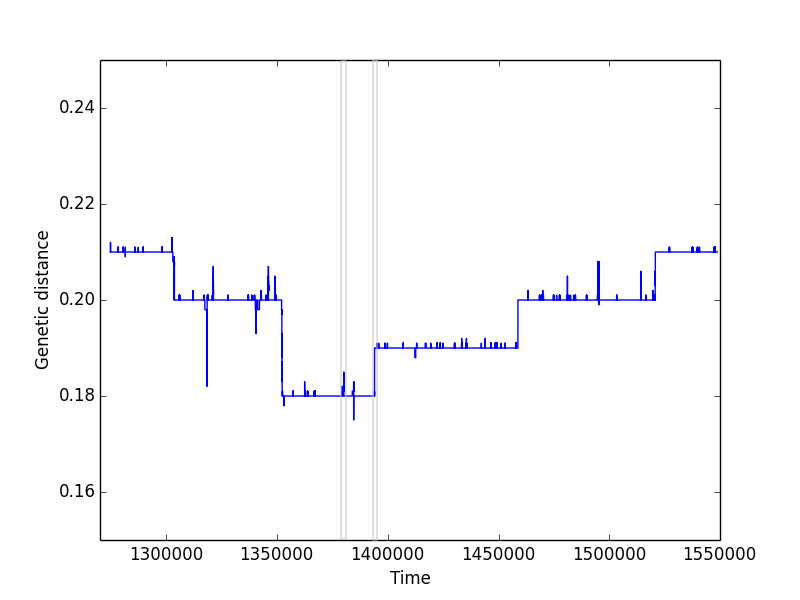}}
\subfigure[]{\includegraphics[width=8cm]{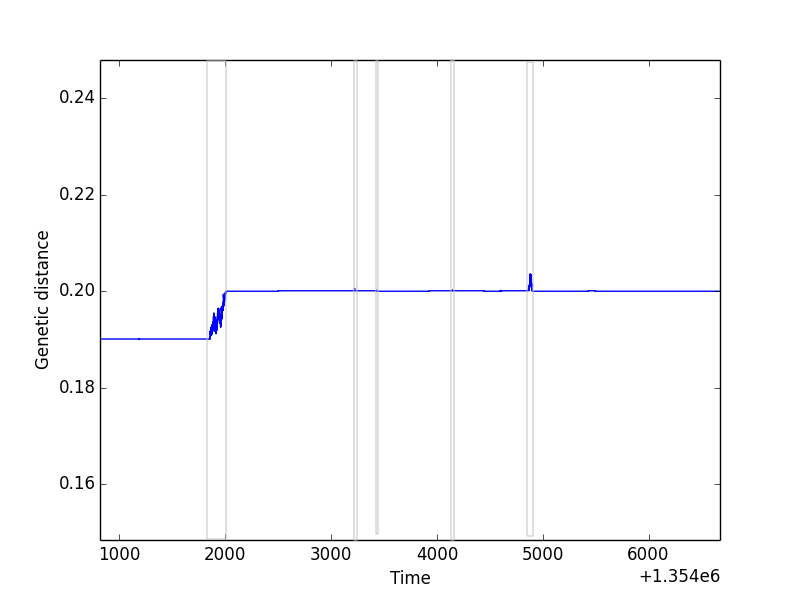}}
\caption{(a) Simulation of the individual based model, for $d = 3$, $N_1 = N_2 = N_3 = 1$, $\eps$ = 0.005, $\gamma = 2e^{-6}$, $l^\eps = 100$, $\lambda = 10$. It is the same simulation as in Figure \ref{simu} (without rescaling time).
Two examples of an homogenization phases are contained in the grey boxes.  (b)  Partial zoom on the curve.}
\label{coup}
\end{center}
\end{figure}

More formally, let us consider $(Y_t^{\gamma,\eps}; t\geq0)$ the process recording the genetic composition in the IBM (i.e. a matrix containing the sequences of the chromosomes of all the individuals in the metapopulation)
{\it after rescaling time by $\gamma \eps$} so that
\begin{enumerate}
\item  For $i \in E$, mutation events on the subpopulation $i$ occurs according to a Poisson Point Process (PPP) with intensity measure $b_\infty l^{\eps} n_i^{\eps} dt$.
\item For $i,j \in E$, migration events from $i$ to $j$ can be described in terms of a PPP with intensity measure $M_{ij}/\eps dt$.
\end{enumerate} 
Define ${\mathcal E}^{\gamma,\eps}$ the event that every time  a subpopulation is affected by a mutation or a migration event on the interval $[0,T]$, the subpopulation is genetically homogeneous when the event occurs (as in Figure \ref{coup}). In other words, there is no overlap between mutation and migration homogenization periods. The time to fixation in our (multi-locus) Moran model only depends on the number of individuals and the number of loci, so in our model it only depends on $\eps$ (but not on $\gamma$). As a consequence, ${\mathbb P}\left({\mathcal E}^{\gamma,\eps}\right)\to 1$ as $\gamma \to 0$.

Next, let us consider $T_t^{\gamma,\eps}$ as the Lebesgue measure of 
\[\{0\leq s\leq t\ : \ \forall i\in \ E \mbox{ pop. $i$ is homogeneous at time $s$}   \}.\]
In words, $( T_t^{\gamma,\eps}; t\geq 0)$ is the random clock which is obtained by skipping the homogenization period after a migration or mutation event (i.e. by skipping the red intervals in Figure \ref{coup}).
By arguing as in the previous paragraph, as $\gamma \to 0$, it is not hard to see that $(T^{\gamma,\eps}_t; t\geq0)$ converges to the identity in the Skorokhod topology on $[0,T]$
for every $T\geq0$.

Let us now consider 
\[Z^{\gamma,\eps}_t \ =\  Y^{\gamma,\eps}_{(T^{\gamma,\eps})^{-1}_t}, \ \mbox{where } \ (T^{\gamma,\eps})^{-1}_t \ = \ \inf\{s\geq0\ : \  T^{\gamma,\eps}_s \geq t  \}. \]
By construction, this process defines a PBM in the sense that at every time $t$, any subpopulation is composed by genetically homogeneous individuals.
Further, since $(T^{\gamma,\eps}_t; t\geq0)$ converges to the identity and mutation and migration events occur at Poisson times, the finite  dimensional distributions of $Z^{\gamma,\eps}$ are  good approximations of the ones for the IBM.

Let us now show that $Z^{\gamma,\eps}$ (constructed from the IBM) is close in distribution to the PBM defined at the begining of  this section. Conditioned on the event ${\cal E}^{\gamma,\eps}$ (whose probability goes to $1$) and on the PPP's described in 1 and 2 above, the PBM $Z^{\gamma,\eps}$ can be described as follows. Define $p_{\Delta t,i}$ to be the probability for a mutant allele (at a given locus of a given individual) to fix in a population of size $n_i^\eps$, {\it  conditioned on the homogenization time to be smaller than $\Delta t$}. Then the distribution of the conditioned PBM $Z^{\gamma,\eps}_t$  can be generated as follows.
\begin{enumerate}
\item[(a)] At every mutation time $t$ in subpopulation $i\in E$, choose a locus $k$ uniformly at random and fix the mutation instantaneously with probability $p_{\frac{\Delta t}{\gamma \eps},i}$, where $\Delta t$ is the time between $t$ and the next mutation or migration event (in our new time scale). We note that if the mutation does not fix, then $Z^{\gamma,\eps}$ is not affected by the mutation event, and as a consequence ``effective mutation'' events in $Z^{\gamma,\eps}$ are obtained from the mutation events in the IBM after thinning each time with their respective probability $p_{\frac{\Delta t}{\gamma \eps},i}.$
\item[(b)] At every migration event $t$ on subpopulation $j$, fix a random set $S$ where $S$ is chosen according to ${\mathcal F}_j^{L^\eps,\lambda, \Delta t/(\gamma\eps)}$,
where $\Delta t$ is defined as in the previous point, and where    ${\mathcal F}_j^{L^\eps, \lambda,s}$ is the random variable ${\mathcal F}_j^{L^\eps, \lambda}$ conditioned on the homogenization to occur in a time smaller smaller than $s$.
\end{enumerate}
Since fixation (of one of the alleles) occurs in finite time almost surely, and  the distribution of the homogenization time only depends on $\eps$, we have
\[ {\mathcal F}_j^{L^\eps, \lambda,\Delta t/(\gamma\eps)} \underset{\gamma\to 0}{\Longrightarrow}  {\mathcal F}_j^{L^\eps, \lambda}, \ \mbox{and} \  \  \lim_{\gamma\to 0} p_{\Delta t/(\gamma\eps),i} \ = \ \frac{1}{n_j^\eps} \]
where the RHS of the second limit is the probability of fixation of a mutant allele in the absence of conditioning.

\medskip

Putting all the previous observations together, one can easily show that the genetic distance in $Z^{\gamma,\eps}$ converges (in the finite dimensional distributions sense) to the ones of the PBM. In particular, we recover the mutation rate on subpopulation $i$ in the PBM
\[ \underbrace{b_\infty l^{\eps} n_i^{\eps} }_{\mbox{rate of mutation in the IBM}}\times \underbrace{\frac{1}{n_i^\eps}}_{\mbox{proba of fixation}} =  b_\infty l^\eps,\] 
which corresponds to the limiting ``effective mutation rate'' in the PBM $Z^{\gamma,\eps}$ (see (a) above) as $\gamma\to 0$.
This completes the proof of Theorem \ref{IBMtoPBM}.
\end{proof}

We note that Theorem \ref{IBMtoPBM} could be extended to the case where the subpopulations are not homogeneous in the IBM at time $t=0$.
 Indeed, arguing as in the proof of Proposition \ref{IBMtoPBM}, if we start  with some non-homogeneous initial condition, then each island becomes homogeneous before experiencing any mutation or mutation event with very high probability. In order to get an efficient coupling between the IBM and the PBM, we simply choose the initial condition of the PBM as the (random) state of the IBM after this initial homogenization period.

\begin{rem}\label{rem:single-locus}
Choose a locus $k\in\{1,\dots,l^{\eps}\}$. We let the reader convince herself that in the PBM the genetic composition at a given locus $k$ follows the following Moran-type dynamics:
\begin{enumerate}
\item[(mutation)] ``Individual'' $i$ takes on a new type (or allele) at rate $b_{\infty}$. 
\item[(reproduction)] ``Individual'' $j$ inherits its type from ``individual'' $i$ at rate 
$
(1/{\eps})M_{ij}\Pp\left(k\in{\cal F}^{L^\eps,  \lambda}_{j}\right).
$
Further, in a neutral one-locus Moran model, the probability of fixation of a single allele in a resident population of size $n_j^\eps$ is equal to its initial frequency, which in our case is $1/n_j^\eps$. Thus 
$$
\frac1{\eps} M_{ij}\Pp\left(k\in{\cal F}^{L^\eps, \lambda}_{j}\right) \ =  \ \frac1{\eps} \ M_{ij}/n_{j}^\eps.
$$
This dynamics is not dependent on the position of the locus under consideration.
\end{enumerate}
\end{rem}

\begin{rem}\label{rem:33}
Our model can be seen as a multi-locus Moran model with inhomogeneous reproduction rates. The main difficulty in analysing this model stems from the fact that there exists a  non trivial correlation between loci. This correlation is induced by the fact that fixation of migrant alleles can occur simultaneously at several loci
during a given migration event.  In turn, the set of fixed alleles during a given migration event is determined by the local  Moran dynamics described in the Introduction. 
\end{rem}

\section{Large population - dense site limit}
\label{mainresult}
In this Section, we study the PBM described in the Section \ref{popbased}, in the large population - dene site limit. In particular, we study the dynamics of the genetic distances and we state  Theorem \ref{thm31}, that together with Theorem \ref{IBMtoPBM}  implies the main result of this article, namely Theorem \ref{thm-main2}, that is a stronger version of Theorem \ref{thm-intro}  (see Section \ref{sect:33}).

\subsection{The genetic partition measure}
\label{sect:idea-proof}
The main difficulty in dealing with the genetic distance is that it lacks the Markov property, and as a consequence, it is not directly amenable to analysis. In fact, when $d >2$, a migration event from $i$ to $j$ can potentially have an effect on the genetic distance between $j$ and another subpopulation $k$ (see Figure \ref{fig3} for an example).

\begin{figure}[h]
\begin{center}
\includegraphics[width=5cm]{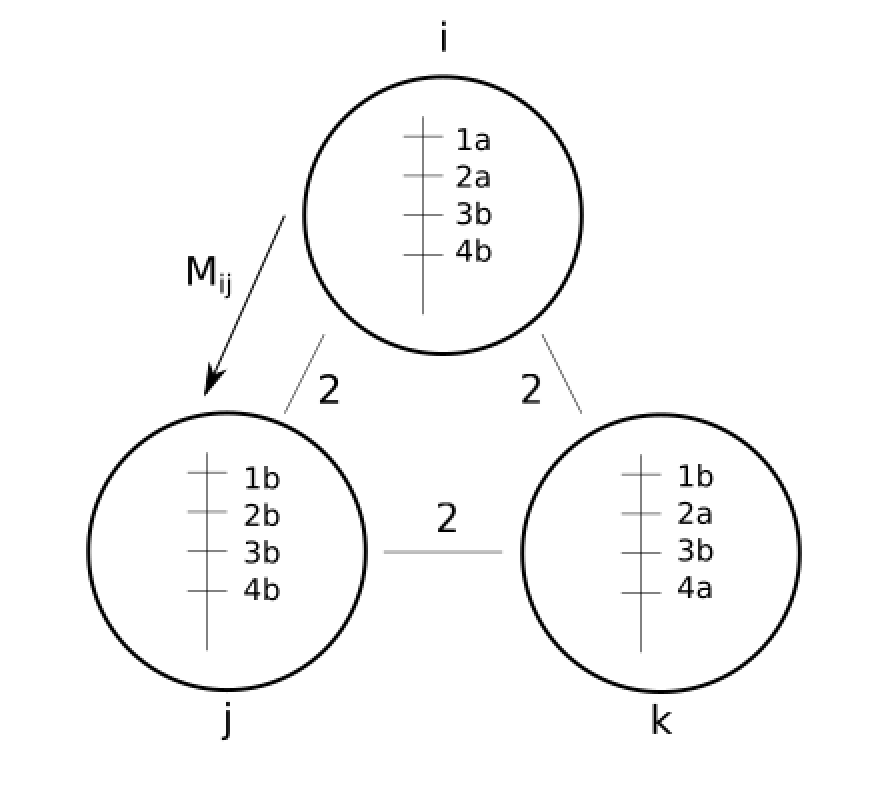} \ \ \ \ 
\includegraphics[width=5cm]{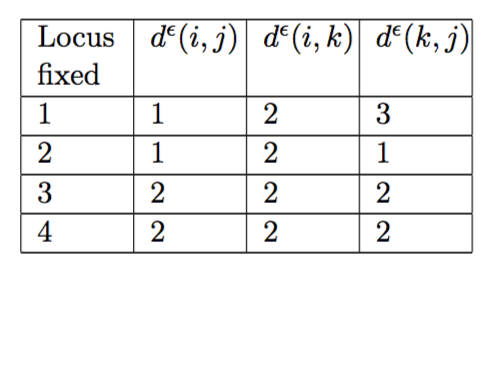} 
\caption{The three subpopulations (i,j,k) are characterised by a chromosome with four loci (1,2,3,4), with different alleles (1a, 1b, ...). The three genetic distances ($d^\eps(i,j), d^\eps(i,k), d^\eps(k,j)$) are equal to two (before migration). The table shows the new genetic distances  after a migration event from $i$ to $j$ where one locus from $i$ is fixed in population $j$.  At locus $1$, the allelic partition $\Pi^\eps_1(t)$ is equal to $\{i\}\{j,k\}$, whereas at locus $4$, $\Pi^\eps_4(t)=\{i,j\} \{k\}$.}
\label{fig3}
\end{center}
\end{figure}

To circumvent this difficulty, we now introduce an auxiliary process -- the genetic partition probability measure  -- from which one can easily recover the genetic distances (see \eqref{eq:pmgp-to-distance} below), and whose asymptotical dynamics is explicitly characterised in Theorem \ref{thm31} below. 

Let ${\cal P}_d$ be the set of partitions of $\{1, \dots, d\}$. Fix $\pi\in {\cal P}_d$ and $i,j\in E$. Define $\mathcal{S}_{i}(\pi)$ as the element of ${\cal P}_d$ obtained from $\pi$ by making $i$ a singleton (e.g., ${\cal S}_2(\{1,2,3\})=\{1,3\}\{2\}$). Define ${\cal I}_{ij}(\pi)$ as the element of ${\cal P}_d$ obtained from $\pi$ by displacing $j$ into the block containing $i$ (e.g., ${\cal I}_{2,3}(\{1,3\}\{2\})=\{1\}\{2,3\}$).

At every locus $k\in\{1,\dots,l^{\eps}\}$ (ordered in increasing order along the chromosome) and every time $t$, the allele composition of the metapopulation induces a partition on $E$.
More precisely, at locus $k$, two subpopulations are in the same block  of the partition at time $t$ iff they share the same allele  at locus $k$.

In the following, fix $L^\eps \in [0,1]^{l^\eps}$, the vector containing the positions of the loci.
In the PBM parametrised by $\eps$, we condition on the loci being located at $L^\eps$ and 
for every $k\in\{1,\dots,l^{\eps}\}$ we let $\Pi^{\eps, L^\eps}_k(t)$ be the partition induced at locus $k$ (see Figure \ref{fig3}). 
The vector $\Pi^{\eps, L^\eps}(t)=\left(\Pi^{\eps, L^\eps}_k(t);  \ k\in\{1,\dots,l^{\eps}\}\right)$ describes the genetic composition of the population at time $t$.  According to the description of our dynamics, $\Pi^{\eps, L^\eps}(t)$  is a Markov chain with the following transition rates:
 
\begin{itemize}
\item(mutation) For every $i\in E$,  $k\in\{1,\dots,l^{\eps}\}$, define ${\cal S}_i^k$ to be the operator on $(\cP_d)^{l^{\eps}}$ such that $\forall \ \Pi \ \in (\cP_d)^{l^{\eps}}$
\begin{equation*}
{\cal S}_i^k(\Pi) = (\tilde \Pi_1,\ldots,\tilde \Pi_{l^\epsilon}) \ \textrm{ with }  \left\{
      \begin{aligned}
        & \tilde \Pi_j = \Pi_j\  \ \ &\forall j \ne k \\
        & \tilde \Pi_k = {\cal S}_i(\Pi_j)\  \ \ & j = k  \\
      \end{aligned}
    \right. \  .
\end{equation*}

The transition rate of the process from state $\Pi$ to ${\cal S}_i^k(\Pi)$ is given by  $b_{\infty}$. 

\item(migration from $i$ to $j$) For every $i,j\in E$,  $S\subset\{1,\dots,l^{\eps}\}$, define ${\cal I}_{ij}^S$ the operator on $(\cP_d)^{l^\eps}$ such that $\forall \ \Pi \ \in (\cP_d)^{l^{\eps}}$
\begin{equation*}
{\cal I}_{ij}^S(\Pi) = (\tilde \Pi_1,\ldots,\tilde \Pi_{l^\epsilon}) \ \textrm{ with } \left\{
      \begin{aligned}
        & \tilde \Pi_j = \Pi_j\   \ \ &\forall k \notin  S  \\
        &  \tilde \Pi_k =  {\cal I}_{ij}(\Pi_k) \  \ \ &\forall k \in S  \\
      \end{aligned}
    \right. .
\end{equation*}

The transition rate of the process from $\Pi$ to ${\cal I}_{ij}^S(\Pi)$ is given by $\frac{M_{ij}}{\eps} {\mathbb P}\left({\cal F}^{L^\eps, \lambda}_{j}=S\right)$.
\end{itemize}
To summarise, for any function $h: {\cal P}_d^{l^\epsilon} \to \R$, the generator of $\Pi^{\eps,L^\eps}$ can be written as 
\beqnn
{\mathbb G}^{\eps, L^\eps} h(\Pi) & = &  \frac{1}{\eps}\sum_{i,j=1}^d  M_{ij} \sum_{S\subset \{1,\dots,l^\eps\}} \Pp({\cal F}^{L^\eps, \lambda}_{j}= S) \ [h({\cal I}_{ij}^S(\Pi)) - h(\Pi)  ]  \ + \nonumber \\
&&  b_{\infty} \sum_{i=1}^d \sum_{k=1}^{l^\eps}  \left(   h({\cal S}_i^k(\Pi))-h(\Pi)\right).  \label{eq:generator}
\eeqnn

\subsection{Some notation}

Let ${\cal M}_{d}$ denote the space of signed finite measures on ${\cal P}_d$. Since ${\cal P}_d$ is finite, we can identify elements of ${\cal M}_{d}$ as vectors of $\R^{Bell_d}$, where $Bell_d$ is the Bell number, which counts the number of elements in ${\cal P}_d$ (the number of partitions of  $d$ elements). In particular, if $\pi$ is a partition of $E$, and $\mu\in{\cal M}_d$, then $\mu(\pi)$ will correspond to the measure of the singleton $\{\pi\}$, or equivalently, to the  ``$\pi^{th}$ coordinate'' of the vector $\mu$. 
We define the inner product $<\cdot, \cdot>$ as
 \begin{eqnarray*} 
 <\cdot, \cdot> :  \  {\cal M}_d \times {\cal M}_d &\rightarrow& \R \nonumber \\
 m,v &\rightarrow&  \sum_{\pi \in {\cal P}_d} m(\pi)v(\pi). \nonumber 
 \end{eqnarray*} 
 For every function $f: {\cal P}_d \rightarrow {\cal P}_d$, we define the operator $*$ s.t for every $m \in {\cal M}_d$, for every $\pi \in {\cal P}_d$, $f*m(\pi) = m(f^{-1}(\pi))$.
In words, $f*m$ is the push-forward measure of $m$ by $f$. 
Further, we will also consider square matrices indexed by elements in $\cP_d$. For such a matrix $K$ and an element $m\in{\cal M}_d$, we define $Km(\pi) := \sum_{\pi'\in {\cal P}_d} K(\pi,\pi')m(\pi')$.

Define 
\begin{eqnarray*}
X & : &  (\cP_d)^{l^\eps}  \ \rightarrow  \  \cM(\cP_d) \nonumber \\
   &  & \Pi   \ \rightarrow  \ \frac1{l^\eps} \ \sum_{k\leq l^{\eps}} \delta_{\Pi_k},
\end{eqnarray*}
i.e., $X(\Pi)$ is the empirical measure associated to the ``sample'' $\Pi_1,\dots,\Pi_{l^\eps}$.
In the following, we define
\begin{equation*}
\xi^{\eps,L^\eps}_t \ := \ X(\Pi^{\eps, L^\eps}(t)) 
\end{equation*}
will be referred to as the (empirical) genetic partition probability measure of the population, conditional on the $l^\eps$ loci to be located at $L^\eps$.
We also define 
\begin{equation*}
\xi^{\eps}_t \ \equiv \ \xi^{\eps, {\cal L}^\eps}_t  \ =  \ X(\Pi^{\eps, {\cal L}^\eps}(t)) \ \textrm{where $ {\cal L}^\eps \sim {\cal U}([0,1]^{l^{\eps}})$}
\end{equation*}
will be referred to as the (empirical) genetic partition probability measure of the population. 

\bigskip

The genetic distance between $i$ and $j$ at time $t$ can then be expressed in terms of  $\xi^\eps_t$ as follows:
\begin{equation} \label{eq:pmgp-to-distance}
d_t^\eps(i,j) = 1-\xi^{\eps}_t( \{\pi \in {\cal P}_d : \ i\sim_\pi j)  \} ).
\end{equation}

In the following, we identify the process $(\xi^{\eps}_t, t\ge0)$ to a process in the set of the c\`adl\`ag functions from $\R^+$ to $\R^{Bell_d}$, equipped with the standard Skorokhod topology.

\subsection{ Convergence of the genetic partition probability measure}
\label{sect:32}
Following Remark \ref{rem:single-locus}, for every $k\in\{1,\dots,l^{\eps}\}$,
the process
$(\Pi_k^{\eps,{L}^\eps}(t); t\geq0)$ -- the partition at locus $k$ -- obeys the following dynamics:
\begin{enumerate}
\item(reproduction event)  $j$ is merged in the block containing  $i$ at rate $M_{ij} \frac1{ \eps n^{\eps}_j}$.
\item(mutation) Individual $i$ takes on a new type at rate $b_{\infty}$.
\end{enumerate}
The generator associated to the allelic partition at locus $k$ is then given by
\be\label{def:G-eps}
G^\eps g(\pi) = \sum_{i,j=1}^d M_{ij}\frac{1}{\eps n^{\eps}_j}\left( \ g({\cal I}_{ij}(\pi)) - g(\pi)  \  \right) \ + \ b_{\infty} \sum_{i=1}^d \left( \ g({\cal S}_{i}(\pi)) - g(\pi)  \  \right).
\ee
Recall that the expression of the generator associated to the allelic partition at a given locus $k$ is independent on the position of locus $k$ and on $\lambda$.
Also recall that $\eps n_i^\eps \to N_i$ and that, by definition, $\forall k,p \in E, \ \tilde M_{kp} = M_{pk}/N_k$. Thus
\be\label{def:G-inf}
G^\eps g(\pi) \to G g(\pi) := \sum_{i,j=1}^d \tilde M_{ji} \left( \ g({\cal I}_{ij}(\pi)) - g(\pi)  \  \right) \ 
+ \ b_\infty \sum_{i=1}^d \left( \ g({\cal S}_{i}(\pi)) - g(\pi)  \  \right) \  \ \mbox{as $\eps\to 0$.}
\ee
Direct computations yield that $^t G$, the transpose of the matrix $G$ satisfies
\begin{equation}
\forall m\in{\cal M}_d, \ \ ^tGm  \ =  \  \sum_{i,j=1}^d \tilde M_{ji}  ({\cal I}_{ij}*m - m) + {b_\infty} \sum_{i=1}^d \left({\cal S}_i*m - m\right). 
\label{generatorG}
\end{equation}

In the light of (\ref{def:G-inf}), the following theorem can be interpreted as an ergodic theorem. We show that the (dynamical) empirical measure constructed from the allelic partitions along the chromosome converges to the probability measure of a single locus. Although in the IBM the different loci are linked and do not fix independently (as already mentioned in Remark \ref{rem:33}), as the number of loci tends to infinity, they become decorrelated. In the large population - dense site limit, the following result indicates that the model behaves as if infinitely many loci evolved independently according to the (one-locus) Moran model with generator $G$ provided in (\ref{def:G-inf}).  

In the following $``\Longrightarrow''$ indicates the convergence in distribution. Also, we identify $(\xi_t^\eps; \ t\ge0)$ to a function from $\R^+$ to $\R^{Bell_d}$; and convergence \textit{in the weak topology} means that for every $T>0$, the process $(\xi_t^\eps; \ t\in[0,T])$ converges in the Skorohod topology $D([0,T], \R^{Bell_ d})$.

\begin{theorem}[Ergodic theorem along the chromosome] 
Assume that $\xi^\eps_0$ is deterministic and there exists a probability measure $P^0 \in {\cal M}_d$ such that the following convergence holds: 
\begin{equation}\label{initial}
\xi^\eps_0 \underset{\eps \to 0}{\longrightarrow} P^0.
\end{equation}
Then 
\begin{equation*}
(\xi_t^\eps; \ t\ge0) \underset{\begin{subarray}{l} \epsilon \to 0 \end{subarray} }{\Longrightarrow} (P_t; \ t\ge0) \ \mbox{in the weak topology},\end{equation*}
where $P$ solves the forward Kolmogorov equation associated to the aforementioned Moran model, i.e.,
\begin{equation*}
\frac{d}{ds}P_s \ = \  ^tG P_s 
\end{equation*}
with initial condition $P_0=P^0$ and where $^t G$ denotes the transpose of $G$ (see (\ref{generatorG})). 
\label{thm31}
\end{theorem}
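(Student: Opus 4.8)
The plan is to read Theorem~\ref{thm31} as a law of large numbers for the empirical measure of a finite-state particle system, the ``particles'' being the $l^\eps$ loci and their states the induced partitions $\Pi_k^{\eps}(t)\in\cP_d$. Since $\cP_d$ is finite it suffices to prove, for each fixed partition $\pi$, that the coordinate $\xi_t^\eps(\pi)=\frac1{l^\eps}\#\{k:\Pi_k^\eps(t)=\pi\}$ converges to the deterministic $P_t(\pi)$ in the finite-dimensional sense, and then to upgrade this to process convergence by a tightness argument. I would carry this out through the martingale problem attached to the generator $\mathbb{G}^{\eps,L^\eps}$ of \eqref{eq:generator}, testing against linear functionals $F_\phi(\mu)=\langle\mu,\phi\rangle$ for $\phi:\cP_d\to\R$.

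First step (first moment / drift). A direct computation of $\mathbb{G}^{\eps,L^\eps}F_\phi$ shows that it is \emph{exactly linear} in $\xi$: the mutation part contributes $b_\infty\sum_i\langle\xi,\phi\circ\mathcal{S}_i-\phi\rangle$, while the migration part, after exchanging the sum over sets $S$ with the sum over loci and using $\Pp(k\in\mathcal{F}_j^{L^\eps,\lambda})=1/n_j^\eps$ (Remark~\ref{rem:single-locus}), collapses to $\sum_{i,j}\frac{M_{ij}}{\eps n_j^\eps}\langle\xi,\phi\circ\cI_{ij}-\phi\rangle$. Hence $\mathbb{G}^{\eps,L^\eps}F_\phi(\xi)=\langle\xi,G^\eps\phi\rangle$ with $G^\eps$ as in \eqref{def:G-eps}, and the mean $m_t^\eps:=\E[\xi_t^\eps]$ solves the \emph{closed} linear system $\tfrac{d}{dt}m_t^\eps={}^tG^\eps m_t^\eps$ with $m_0^\eps=\xi_0^\eps$. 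Because $G^\eps\to G$ entrywise and the solution of a linear ODE depends continuously on its coefficients and initial data, $m_t^\eps\to P_t$. I would stress that this step is insensitive to the positions $L^\eps$ and to $\lambda$, since only the single-locus fixation probability $1/n_j^\eps$ enters; this is the source of the $\lambda$-independence noted after Theorem~\ref{thm-intro}.

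Second step (fluctuations -- the main obstacle). To pass from convergence of means to convergence in probability, I must show that $\mathrm{Var}(\langle\xi_t^\eps,\phi\rangle)\to0$, which I would obtain via a Gronwall argument once the predictable quadratic variation of the martingale $M_t^{\eps,\phi}=\langle\xi_t^\eps,\phi\rangle-\langle\xi_0^\eps,\phi\rangle-\int_0^t\langle\xi_s^\eps,G^\eps\phi\rangle\,ds$ is shown to vanish. The mutation contribution and the ``diagonal'' of the migration contribution are $O(1/l^\eps)$ and are harmless; the delicate term is the off-diagonal migration part,
\be
\frac1\eps\sum_{i,j}M_{ij}\frac1{(l^\eps)^2}\sum_{k\ne k'}\big(\phi\circ\cI_{ij}-\phi\big)(\Pi_k)\,\big(\phi\circ\cI_{ij}-\phi\big)(\Pi_{k'})\,\Pp\big(k,k'\in\mathcal{F}_j^{L^\eps,\lambda}\big),
\ee
which is created by the \emph{simultaneous} fixation of several loci during a single migration event (Remark~\ref{rem:33}). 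The crude bound $\Pp(k,k'\in\mathcal{F}_j^{L^\eps,\lambda})\le1/n_j^\eps$ only produces an $O(1)$ quantity and is therefore insufficient; the heart of the argument is to show that recombination decorrelates loci, so that for loci at positive genetic distance the joint fixation probability relaxes towards the product $(1/n_j^\eps)^2$. Since the positions are i.i.d.\ uniform, only a vanishing proportion -- of order $(l^\eps)^2/n_j^\eps$ -- of the $(l^\eps)^2$ pairs remain tightly linked, and I would establish the quantitative estimate $\E|\mathcal{F}_j^{L^\eps,\lambda}|^2=o\big((l^\eps)^2/n_j^\eps\big)$, which makes the displayed term vanish as $\eps\to0$. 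This is precisely where $\lambda>0$ together with $l^\eps\to\infty$ is used, and I expect it to be the most technical part of the proof.

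Third step (tightness and identification). With jumps of size $O(1/l^\eps)$ for mutation and $O(|S|/l^\eps)$ for migration, whose squares are controlled by the previous step, and with a uniformly bounded drift, tightness of $(\xi^\eps)$ in $D([0,T],\R^{Bell_d})$ follows from the standard Aldous--Rebolledo criteria, and every limit point is continuous. Passing to the limit in the martingale decomposition, the vanishing of the quadratic variation forces each limit point to satisfy $\langle\xi_t,\phi\rangle=\langle\xi_0,\phi\rangle+\int_0^t\langle\xi_s,G\phi\rangle\,ds$ with no martingale part, i.e.\ to be a deterministic solution of $\tfrac{d}{ds}P_s={}^tGP_s$ with $P_0=P^0$. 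Uniqueness for this linear ODE identifies the unique limit point as $P$, yielding $\xi^\eps\Longrightarrow P$ in the weak topology. Finally, since the loci positions are i.i.d., the family $(\Pi_k^\eps(\cdot))_k$ is exchangeable, which both streamlines the pair-decorrelation step of the second stage and lets the unconditional statement be deduced from the estimates obtained at fixed $L^\eps$.
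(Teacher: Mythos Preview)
Your proposal is correct and follows essentially the same route as the paper: semimartingale decomposition of $\langle\xi^\eps_t,v\rangle$ via $\mathbb{G}^{\eps,L^\eps}$ applied to linear test functions (the paper's Lemma~\ref{lem:gen-first-order}), control of the predictable quadratic variation through the second moment $\E_{\lambda,L^\eps,j}|S|^2=\sum_{k,k'}\Pp(k,k'\in\mathcal{F}_j^{L^\eps,\lambda})$ (Proposition~\ref{prop:conv:m} and Lemma~\ref{lemma:ARG}), tightness by the Aldous criterion, and identification of the limit through uniqueness for the linear ODE. The two points the paper makes more explicit than you do are (i) the exact computation of $\Pp(k,k'\in\mathcal{F}_j^{L^\eps,\lambda})$ via the two-locus ancestral recombination graph, giving $\frac{2}{(n_j^\eps)^2}\big(r^{L^\eps,\lambda}_{k,k'}+\tfrac{2}{n_j^\eps}\big)^{-1}$, and (ii) the fact that your target estimate $o\big((l^\eps)^2/n_j^\eps\big)$ does \emph{not} hold for every fixed $L^\eps$ but only after averaging over the random uniform positions $\mathcal{L}^\eps$, where it becomes $O(\eps\log(1/\eps))$; so your final sentence should be reversed---one passes to the unconditional setting first and proves the fluctuation bound there, rather than deducing it from fixed-$L^\eps$ estimates.
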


\section{Proof of Theorem \ref{thm31}}
\label{sect:proof}

The idea behind the proof is to condition on ${\cal L}^\eps= L^\eps$, and then decompose the Markov process 
$(<\xi^{\eps, L_\eps}_t,v>; t\geq0)$ into a drift part and a Martingale part.
We show that the drift part converges to the solution of the Kolmogorov equation alluded to in Theorem \ref{thm31} and  
  that the Martingale part vanishes when $\eps \to 0$.
 The main steps of the computation are outlined in the next subsection. We leave technical details (tightness and second moment computations)
 until the end of the section.

\subsection{Main steps of the proof}

Fix $L^\eps \in [0,1]^{l^\eps}$. Recall the definition of ${\mathbb G}^{\eps, L^\eps}$, the generator of the process $(\Pi^{\eps,L^\eps}(t); \ t\geq0)$, given in \eqref{eq:generator}. Let $f: \R \to \R$ be a Borel bounded function and fix $v \in {\cal M}_d$. Define $h(\Pi) \ = \ f(\left<X(\Pi),v\right>)$. Then, it is straightforward to see from   \eqref{eq:generator} that
\begin{eqnarray}
{\mathbb G}^{\epsilon, L^\eps} h(\Pi) &  = & \frac{1}{\epsilon}  \sum_{i,j=1}^d  M_{ij}  \  \E_{\lambda,L^\eps,j} \left(   f(\left<X({\cal I}_{ij}^{ S}(\Pi)),v\right>)-f(\left<X(\Pi),v\right>) \right)  \nonumber \\
& + &  b_{\infty} l^{\eps}  \sum \limits_{i=1}^d  \  \E_{l^\eps}\left(   f(\left<X({\cal S}_i^K(\Pi)),v\right>)-f(\left<X(\Pi),v\right>) \right)  , 
\label{eq:gener-x}
\end{eqnarray}
where in the first line  $ \E_{\lambda,L^\eps,j}$ is the expected value  taken with respect to the random variable $S$, distributed as ${\cal F}_{j}^{L^\eps, \lambda}$
as defined in Definition \ref{def-F}.
In the second line, $ \E_{l^\eps}$ is the expected value is taken with respect to $K$, distributed as a uniform random variable on $\{1,\dots,l^{\eps}\}$.

\begin{lemma}\label{lem:gen-first-order}
Define $v \in {\cal M}_d $, $L^\eps \in [0,1]^{l^\eps}$, $ g(\Pi) := \left<X(\Pi), v\right>$. Then ${\mathbb G}^{\eps, L^\eps} g(\Pi) \ = \ \left<^t G^\eps X(\Pi), v\right>$ where $^t G^\eps$ is the transpose of $G^\eps$ -- the generator of the allelic partition at a single locus as defined in (\ref{def:G-eps}) -- i.e.,
\begin{equation}
\forall m \in{\cal M}_d, \ \  ^t G^\eps m \ = \ \sum_{i,j=1}^d M_{ij} \frac{1}{\eps n^{\eps}_i} ({\cal I}_{ij}*m - m) + b_{\infty} \sum_{i=1}^d \left({\cal S}_i*m - m\right).
\label{Gepsdef}
\end{equation}
\end{lemma}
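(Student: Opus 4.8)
The plan is to exploit the fact that $g$ is \emph{linear} in the empirical measure $X(\Pi)$, so that the action of the full (measure-valued) generator ${\mathbb G}^{\eps,L^\eps}$ on $g$ collapses onto the one-locus generator $G^\eps$ applied to the test vector $v$. The algebraic backbone is the adjunction between push-forward of measures and pre-composition of functions: for any map $f:\cP_d\to\cP_d$ and any $m\in\cM_d$,
\begin{equation*}
\left<f*m,\,v\right>\ =\ \left<m,\,v\circ f\right>,
\end{equation*}
which follows immediately by re-indexing the sum over $\cP_d$ in the definition $f*m(\pi)=m(f^{-1}(\pi))$. Together with the defining property of the transpose, $\left<{}^tG^\eps m,v\right>=\left<m,G^\eps v\right>$, this reduces the whole lemma to the single identity ${\mathbb G}^{\eps,L^\eps}g(\Pi)=\left<X(\Pi),\,G^\eps v\right>$; the explicit form of $^tG^\eps$ in (\ref{Gepsdef}) is then read off by expanding $G^\eps v$ as in (\ref{def:G-eps}) and applying the adjunction to each summand.

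First I would handle the mutation part of (\ref{eq:gener-x}) (taking the generic $f$ there to be the identity). Because ${\cal S}_i^K$ alters only the single, uniformly chosen locus $K$, one has $X({\cal S}_i^K(\Pi))-X(\Pi)=\frac1{l^\eps}\big(\delta_{{\cal S}_i(\Pi_K)}-\delta_{\Pi_K}\big)$, so pairing with $v$ gives an increment $\frac1{l^\eps}\big(v({\cal S}_i(\Pi_K))-v(\Pi_K)\big)$. Averaging over $K$ and multiplying by the prefactor $b_\infty l^\eps$, the two factors of $l^\eps$ cancel and the mutation contribution becomes $b_\infty\sum_{i}\frac1{l^\eps}\sum_{k}\big(v({\cal S}_i(\Pi_k))-v(\Pi_k)\big)=b_\infty\sum_i\left<X(\Pi),\,v\circ{\cal S}_i-v\right>$, which is exactly the mutation part of $\left<X(\Pi),G^\eps v\right>$.

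Next I would handle the migration part. Since ${\cal I}_{ij}^S$ modifies every locus in the random set $S\sim{\cal F}^{L^\eps,\lambda}_j$, the increment paired with $v$ is $\frac1{l^\eps}\sum_{k\in S}\big(v({\cal I}_{ij}(\Pi_k))-v(\Pi_k)\big)$. Taking $\E_{\lambda,L^\eps,j}$ and using the elementary identity $\E\big[\sum_{k\in S}a_k\big]=\sum_k\Pp(k\in S)\,a_k$, the only feature of the law of ${\cal F}^{L^\eps,\lambda}_j$ that enters is its marginal inclusion probability. The crucial input, supplied by Remark \ref{rem:single-locus}, is that this marginal is the one-locus fixation probability $\Pp(k\in{\cal F}^{L^\eps,\lambda}_j)=1/n^\eps_j$, independent of the locus $k$, of $\lambda$, and of the positions $L^\eps$. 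It therefore factors out, and after multiplying by $M_{ij}/\eps$ and summing over $i,j$ one recovers $\sum_{i,j}M_{ij}\frac1{\eps n^\eps_j}\left<X(\Pi),\,v\circ{\cal I}_{ij}-v\right>$, the migration part of $\left<X(\Pi),G^\eps v\right>$. Adding the two contributions yields ${\mathbb G}^{\eps,L^\eps}g(\Pi)=\left<X(\Pi),G^\eps v\right>=\left<{}^tG^\eps X(\Pi),v\right>$, and the adjunction, applied termwise to $G^\eps v$, produces the explicit expression for the transpose $^tG^\eps$.

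The computation is essentially bookkeeping, so I do not expect a genuine technical obstacle; rather, the one point worth isolating is conceptual. A \emph{linear} functional of $X(\Pi)$ can only ever detect the one-dimensional marginals $\Pp(k\in{\cal F}^{L^\eps,\lambda}_j)$ of the fixation set, never its joint distribution. Consequently the delicate correlations between loci created when several alleles fix simultaneously during a single migration event (Remark \ref{rem:33}) are entirely invisible at the level of this first-order identity, and the recombination rate $\lambda$ drops out completely. Those correlations will only resurface later, through the second-moment (quadratic-variation) estimates needed to control the martingale part in Section \ref{sect:proof}.
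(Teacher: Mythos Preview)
Your proof is correct and follows essentially the same approach as the paper: both compute the mutation term by observing that a single locus changes, and both reduce the migration term to the one-locus marginal $\Pp(k\in{\cal F}^{L^\eps,\lambda}_j)=1/n^\eps_j$ via linearity, so that only first moments of the fixation set enter. The only cosmetic difference is that you work on the dual side (computing $\left<X(\Pi),G^\eps v\right>$ via $v\circ{\cal I}_{ij}$ and $v\circ{\cal S}_i$), whereas the paper stays on the measure side (computing the push-forwards ${\cal I}_{ij}*X(\Pi)$ and ${\cal S}_i*X(\Pi)$ directly); the adjunction you state makes these two presentations equivalent.
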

\begin{proof}
We define the two following signed measures:
\begin{equation}\label{defdronde}
\partial_{ij}^{ S} X(\Pi) \ = \ X({\cal I}_{ij}^{S}(\Pi)) \ - \ X(\Pi), \ \  \partial_i^K X(\Pi) \ = \ X({\cal S}_i^K(\Pi)) \ - \ X(\Pi).
\end{equation}
In words, $\partial_{ij}^{ S} X(\Pi)$ is the change in the genetic partition measure $X(\Pi)$ if we merge $j$ in the block of $i$ at every locus in $S$, and $\partial_i^K X(\Pi)$  is the change in $X(\Pi)$ if we single out element $i$ at locus $K$.
Using those notations, for our particular choice of $g$, \eqref{eq:gener-x} writes
\begin{eqnarray*}
{\mathbb G}^{\epsilon, L^\eps} g(\Pi) &  = & \frac{1}{\eps}\sum_{i,j=1}^d  M_{ij} \  \E_{\lambda,L^\eps,j} \left(   \left< \partial_{ij}^{S} X(\Pi),v\right>  \right) 
\ + \  b_{\infty}l^\eps  \sum_{i=1}^d  \  \E_{l^\eps}\left( \left<  \partial_i^K X(\Pi), v \right>\right).  
\end{eqnarray*}
We now show that for every $v \in {\cal M}_d$, 
\begin{eqnarray}
  \E_{\lambda,L^\eps,j} \left(  \left<  \partial_{ij}^{S} X(\Pi), v\right>  \right) \   &=& \  \frac{1}{n^{\eps}_j} \left<{\cal I}_{ij}*X(\Pi) - X(\Pi), v\right>,  \nonumber \\
  \E_{l^\eps} \left(    \left<  \partial_i^K X(\Pi), v \right>  \right)   \ &=& \   \frac{1}{l^{\eps}}\left<{\cal S}_{i}*X(\Pi) - X(\Pi), v \right>. \label{eq:exp-partial}
\end{eqnarray}
We only prove the first identity. The second one can be shown along the same lines.
Again, we let $\Pi_k$ be the $k^{th}$ coordinate of $\Pi$.  By definition, the vector ${\cal I}_{ij}^{S}(\Pi)$
is only modified at the coordinates belonging to $S$, and thus
\begin{eqnarray}
\left<\partial_{ij}^S X(\Pi),v\right> 
& = & \sum \limits_{\pi \in \mathcal{P}_d} \ \frac{v(\pi)}{l^{\eps}} \left( |\{k\le l^{\eps} \ : \ ({\cal I}_{ij}^{S}(\Pi))_k =\pi\}| \ - \ |\{k \le l^\eps : \Pi_k = \pi\}| \right) \nonumber \\
& = &  \sum \limits_{\pi \in \mathcal{P}_d} \ \frac{v(\pi)}{l^{\eps}} \left( |\{k \in S \ : \  {\cal I}_{ij}^{S}(\Pi_k) =\pi\}|  -  |\{k \in S : \Pi_k = \pi\}| \right).
 \label{partial_v}
\end{eqnarray}
Secondly, for every $j \in E$, 
\begin{eqnarray}
 \E_{\lambda,L^\eps,j}(|\{k\in S \ : \ \Pi_k = \pi \}|)   \ & =  &  \ \E_{\lambda,L^\eps,j}(\sum \limits_{k \in S} 1_{\{ \Pi_k = \pi \}})
 \ =  \   \ \sum \limits_{k \le l^\eps} 1_{\{ \Pi_k = \pi \}} \E_{\lambda,L^\eps,j}(1_{\{k \in S\}}). \nonumber 
 \end{eqnarray}
As $S$ is distributed as ${\cal F}_{j}^{ L^\eps, \lambda}$, we can use the fact that ${\mathbb P} \left(k\in {\cal F}_{j}^{ L^\eps, \lambda} \right) \ =   \ M_{ij} \frac{1}{ n^{\eps}_j}$ (see Remark \ref{rem:single-locus}), and then
 \begin{eqnarray}
 \E_{\lambda,L^\eps,j}(|\{k\in S \ : \ \Pi_k = \pi \}|)   & = &  \frac{1}{n^{\eps}_j}  | \{ k \le l^\eps, \ \Pi_k = \pi \} |  =   \frac{l^{\eps}}{n^{\eps}_j} X(\Pi)(\pi). \label{eq:e-sx} 
\end{eqnarray}
Furthermore, by applying \eqref{eq:e-sx} for every $\pi' \in \mathcal{I}_{ij}^{-1}(\pi)$ and then taking the sum over every such partitions, we get
$$
 \E_{\lambda,L^\eps,j}(|\{k\in  S \ : \ {\cal I}_{ij}(\Pi_k) = \pi\}|   \ =  \  \frac{l^{\eps}}{n^{\eps}_j}  X(\Pi)({\cal I}_{ij}^{-1}(\pi)) =  \frac{l^{\eps}}{n^{\eps}_j} \  {\cal I}_{ij}*  X(\Pi)(\pi).$$ 
This completes the proof of \eqref{eq:exp-partial}.  From this result, we deduce that
$$
{\mathbb G}^{\eps,L^\eps} g(\Pi) \ = \frac1{\eps} \sum_{i,j=1}^d M_{ij} \frac{1}{n^{\eps}_j}  ({\cal I}_{ij}*X(\Pi) - X(\Pi)) \ + \ b_{\infty} \sum_{i=1}^d \left({\cal S}_i*X(\Pi) - X(\Pi)\right).
$$
This completes the proof of Lemma \ref{lem:gen-first-order}.
\end{proof}

For every $L^\eps \in [0,1]^{l^\eps}$,  for every $v \in {\cal M}_d$, define
\begin{equation*}
M^{\eps, L^\eps,v}_t := \left<\xi_t^{\eps, L^\eps},v\right> \ - \  \int_0^t \left<^t G^\eps \xi_s^{\eps, L^\eps},v\right> ds, \ \  B^{\eps, L^{\eps, L^\eps},v}_t \ := \  \int_0^t \left<^t G^\eps \xi_s^{\eps, L^\eps},v\right> ds.
\end{equation*}
Since $\left<\xi_t^{\eps, L^\eps},v\right>$ is bounded, the previous result implies that $M^{\eps,L^\eps,v}$ is a martingale with respect to $({\cal H}^{L^\eps}_t)_{t\ge0}$, the filtration generated by  $(\Pi^{\eps,  L^\eps}(t); t\ge0)$. Further, the semi-martingale $\left<\xi_t^{\eps,L^\eps},v\right>$ admits the following decomposition:
$$
\left<\xi_t^{\eps, L^\eps},v\right>  \ = \  M^{\eps, L^\eps,v}_t \ + \ B^{\eps, L^\eps,v}_t. 
$$  
\begin{lemma}\label{lem:bracket}
For every $v \in {\cal M}_d$, for every $L^\eps \in [0,1]^{l^\eps}$,
$$
\left<M^{\eps, L^\eps,v}\right>_t \ = \   \int_0^t \ m^{\eps, L^\eps,v}(\Pi^{\eps,L^\eps}(s) ) ds 
$$
with  
$$
m^{\eps, L^\eps,v}(\Pi) = \frac1{\eps } \sum_{i,j= 1}^d \ M_{ij} \ \E_{\lambda,L^\eps,j} \left( \left<\partial_{ij}^S X(\Pi),v\right>^2\right)  \ +  \ b_{\infty} l^\eps \sum \limits_{i=1}^d \E_{l^\eps}\left(\left<\partial_i^K X(\Pi),v \right>^2\right) 
$$ 
and where  $\left<M^{\eps, L^\eps,v}\right>_t$ denotes the quadratic variation of $M^{\epsilon, L^\epsilon, v}$.
\end{lemma}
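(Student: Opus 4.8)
The plan is to recognize $M^{\eps,L^\eps,v}$ as the Dynkin martingale of the bounded function $g(\Pi) := \left<X(\Pi),v\right>$ and to read off its predictable quadratic variation from the carr\'e-du-champ operator of $\mathbb{G}^{\eps,L^\eps}$. Since $\left<\xi_t^{\eps,L^\eps},v\right> = g(\Pi^{\eps,L^\eps}(t))$ and, by Lemma \ref{lem:gen-first-order}, $\left<{}^tG^\eps \xi_s^{\eps,L^\eps},v\right> = \mathbb{G}^{\eps,L^\eps}g(\Pi^{\eps,L^\eps}(s))$, the process $M^{\eps,L^\eps,v}$ is precisely $g(\Pi^{\eps,L^\eps}(t)) - \int_0^t \mathbb{G}^{\eps,L^\eps}g(\Pi^{\eps,L^\eps}(s))\,ds$. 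Crucially, the state space $(\cP_d)^{l^\eps}$ is finite, so $(\Pi^{\eps,L^\eps}(t); t\ge0)$ is a finite-state continuous-time Markov chain and $g$ is bounded; all the martingale computations below are therefore elementary and free of integrability concerns.

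First I would invoke the general identity for the predictable quadratic variation of such a martingale, namely
$$\left<M^{\eps,L^\eps,v}\right>_t \ = \ \int_0^t \Gamma(g)(\Pi^{\eps,L^\eps}(s))\,ds, \qquad \Gamma(g) \ := \ \mathbb{G}^{\eps,L^\eps}(g^2) - 2g\,\mathbb{G}^{\eps,L^\eps}g,$$
where $\Gamma$ is the carr\'e-du-champ operator; for a finite-state chain this follows from Dynkin's formula applied to $g^2$ together with the semimartingale decomposition of $(M^{\eps,L^\eps,v})^2$. The key structural fact is that for any pure-jump generator of the form $\mathbb{G}h(\Pi) = \sum_\Psi q(\Pi,\Psi)\,[h(\Psi) - h(\Pi)]$, a one-line algebraic cancellation yields
$$\Gamma(g)(\Pi) \ = \ \sum_\Psi q(\Pi,\Psi)\,[g(\Psi) - g(\Pi)]^2,$$
i.e. the carr\'e du champ is the jump-rate-weighted sum of squared increments of $g$.

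Second, I would substitute the two families of transitions of $\mathbb{G}^{\eps,L^\eps}$ read off from \eqref{eq:generator}: migration jumps $\Pi \mapsto {\cal I}_{ij}^S(\Pi)$ at rate $\frac{M_{ij}}{\eps}\Pp({\cal F}_j^{L^\eps,\lambda}=S)$, and mutation jumps $\Pi \mapsto {\cal S}_i^k(\Pi)$ at rate $b_\infty$ for each $i\in E$ and each locus $k\le l^\eps$. By \eqref{defdronde} the associated increments of $g$ are $g({\cal I}_{ij}^S(\Pi)) - g(\Pi) = \left<\partial_{ij}^S X(\Pi),v\right>$ and $g({\cal S}_i^k(\Pi)) - g(\Pi) = \left<\partial_i^k X(\Pi),v\right>$, so that
$$\Gamma(g)(\Pi) = \frac{1}{\eps} \sum_{i,j=1}^d M_{ij} \sum_{S\subset\{1,\dots,l^\eps\}} \Pp({\cal F}_j^{L^\eps,\lambda}=S)\left<\partial_{ij}^S X(\Pi),v\right>^2 \ + \ b_\infty \sum_{i=1}^d \sum_{k=1}^{l^\eps}\left<\partial_i^k X(\Pi),v\right>^2.$$
Rewriting the sum over $S$ as the expectation $\E_{\lambda,L^\eps,j}$ and the sum over loci as $l^\eps\,\E_{l^\eps}$ (with $K$ uniform on $\{1,\dots,l^\eps\}$) turns the right-hand side into exactly $m^{\eps,L^\eps,v}(\Pi)$, which is the claim.

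There is no genuinely hard step here: the whole argument is the routine carr\'e-du-champ computation for a finite-state chain. The only points deserving a word of care are that the $t$-constant $g(\Pi^{\eps,L^\eps}(0))$ present in $M^{\eps,L^\eps,v}$ does not affect the quadratic variation, and that the combinatorial bookkeeping is kept straight, so that each ordered pair $(i,j)$, each admissible set $S$, and each locus $k$ is weighted by its correct rate --- in particular that summing the mutation term over all $l^\eps$ loci is what produces the factor $b_\infty l^\eps$ once the average $\E_{l^\eps}$ is factored out. As an alternative that avoids quoting the carr\'e-du-champ formula, one may argue directly: $M^{\eps,L^\eps,v}$ is a finite-variation pure-jump martingale whose jumps coincide with those of $\left<\xi^{\eps,L^\eps},v\right>$, so its optional quadratic variation is $\sum_{s\le t}\bigl(\Delta\left<\xi_s^{\eps,L^\eps},v\right>\bigr)^2$, and compensating this sum --- replacing each jump by its rate --- gives the same integral $\int_0^t m^{\eps,L^\eps,v}(\Pi^{\eps,L^\eps}(s))\,ds$.
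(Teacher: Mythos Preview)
Your proof is correct and follows essentially the same approach as the paper: both identify the integrand of the predictable quadratic variation with $\mathbb{G}^{\eps,L^\eps}(g^2) - 2g\,\mathbb{G}^{\eps,L^\eps}g$ for $g(\Pi)=\left<X(\Pi),v\right>$. The only difference is packaging --- you invoke the carr\'e-du-champ formula by name and quote the jump-chain identity $\Gamma(g)(\Pi)=\sum_\Psi q(\Pi,\Psi)[g(\Psi)-g(\Pi)]^2$, whereas the paper performs the same computation explicitly by applying Dynkin's formula to $h=g^2$, expanding each $\left<X(\cdot),v\right>^2$ increment, and comparing with the It\^o decomposition of $\left<\xi^{\eps,L^\eps},v\right>^2$.
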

\begin{proof}
For $v\in {\cal M}_d$, define $h(\Pi)=\left<X(\Pi),v\right>^2$. Then, by   \eqref{eq:gener-x}
\begin{eqnarray*}
{\mathbb G}^{\eps, L^\eps} h(\Pi) & = & \frac{1}{\eps}  \sum \limits_{i,j=1}^d M_{ij}  \E_{\lambda,L^\eps,j} \left(\left<X({\cal I}_{ij}^S(\Pi),v)\right>^2\ -\ \left<X(\Pi),v\right>^2\right)   \nonumber \\
&& + \  b_{\infty} l^{\eps}  \sum \limits_{i=1}^d \E_{l^\eps}\left(\left<X({\cal S}_i^K(\Pi)),v\right>^2 - \left<X(\Pi),v\right>^2\right)  .
\end{eqnarray*}
Since
\begin{eqnarray*}
\left<X({\cal I}_{ij}^S(\Pi)),v\right>^2\ -\ \left<X(\Pi),v\right>^2 & = &  \left<X({\cal I}_{ij}^S(\Pi))-X(\Pi),v\right>^2 \ + \ 2 \left<X({\cal I}_{ij}^S(\Pi))-X(\Pi),v\right> \ \left<X(\Pi),v\right> \nonumber \\
\left<X({\cal S}_i^k(\Pi)),v\right>^2\ -\ \left<X(\Pi),v\right>^2 & = &  \left<X({\cal S}_i^k(\Pi))-X(\Pi),v\right>^2 \ + \ 2 \left<X({\cal S}_i^k(\Pi))-X(\Pi),v\right> \ \left<X(\Pi),v\right>,  \nonumber
\end{eqnarray*}
the previous identities yield
\begin{eqnarray*}
{\mathbb G}^{\eps, L^\eps} h(\Pi) \ &= & \  2 {\mathbb G}^{\eps, L^\eps} g(\Pi)\left<X(\Pi),v\right> \ + \\
&& \frac{1}{\eps}  \sum \limits_{i,j = 1}^d M_{ij} \E_{\lambda,L^\eps,j} \left(\left<\partial_{ij}^SX(\Pi),v\right>^2\right) 
 + \ b_{\infty} l^\eps \sum \limits_{i=1}^d \E_{l^\eps}\left(\left<\partial_i^K X(\Pi),v\right>^2\right) , \nonumber
\end{eqnarray*}
where $g(\Pi)= \left<X(\Pi),v\right>$. As a consequence
\beqnn
\left<X(\Pi^{\eps,L^\eps}(t)),v\right>^2 \ - \  2\int_0^t  \ {\mathbb G}^{\eps, L^\eps} g(\Pi^{\eps,L^\eps}(s))\left<X(\Pi^{\eps,L^\eps}(s)),v\right> ds  \nonumber \\
 - \int_0^t \ \frac{1}{\eps} \sum \limits_{i,j=1}^d M_{ij}  \E_{\lambda,L^\eps,j} \left(\left<\partial_{ij}^SX(\Pi^{\eps, L^\eps}(s)),v\right>^2\right) ds -
\    \int_0^t \ b_{\infty} l^\eps \sum \limits_{i=1}^d \E_{l^\eps}\left(\left<\partial_i^K X(\Pi^{\eps, L^\eps}(s)),v\right>^2\right) ds \nonumber
\eeqnn
is a martingale.
Further using It\^o's formula, the process 
$$
\left<X(\Pi^{\eps,L^\eps}(t)),v\right>^2 - 2\int_0^t  {\mathbb G}^{\eps, L^\eps} g(\Pi^{\eps,L^\eps}(s))\left<X(\Pi^{\eps,L^\eps}(s)),v\right> ds \ - \ \left<M^{\eps, L^{\eps},v}\right>_t
$$ 
is also a martingale. Combining the two previous results completes the proof of Lemma \ref{lem:bracket}. 
\end{proof}

\begin{prop}\label{prop:conv:m} 
$$
  \lim \limits_{\eps \to 0} \ \E( \sup_{\Pi \in ({\cal P}_d)^{l^\eps}}m^{\eps, {\cal L}^\eps,v}(\Pi)) =  \  0,$$
  where the expected value is taken with respect to the random variable $ {\cal L}^\eps$.
\end{prop}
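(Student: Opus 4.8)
The plan is to split $m^{\eps,L^\eps,v}$ into its mutation and migration contributions and show that, after taking $\sup_\Pi$ and then $\E_{{\cal L}^\eps}$, each vanishes as $\eps\to 0$. Write $C_v := \max_{\pi\in{\cal P}_d}|v(\pi)|$, finite since ${\cal P}_d$ is finite. The mutation part is harmless and needs no averaging: since ${\cal S}_i^K$ only alters the $K$-th coordinate, $\langle\partial_i^K X(\Pi),v\rangle = \frac1{l^\eps}(v({\cal S}_i(\Pi_K))-v(\Pi_K))$, so $|\langle\partial_i^K X(\Pi),v\rangle|\le 2C_v/l^\eps$ and hence $b_\infty l^\eps\sum_{i}\E_{l^\eps}(\langle\partial_i^K X(\Pi),v\rangle^2)\le 4 d\, b_\infty C_v^2/l^\eps$, uniformly in $\Pi$ and $L^\eps$; this tends to $0$ because $l^\eps\to\infty$.

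For the migration part I would expand the square: with $\phi_k := v({\cal I}_{ij}(\Pi_k))-v(\Pi_k)$ (so $|\phi_k|\le 2C_v$) one has $\langle\partial_{ij}^S X(\Pi),v\rangle^2 = \frac1{(l^\eps)^2}\sum_{k,k'\in S}\phi_k\phi_{k'}$, and taking $\E_{\lambda,L^\eps,j}$ and bounding $|\phi_k\phi_{k'}|\le 4C_v^2$ gives, \emph{uniformly in} $\Pi$,
\[
\E_{\lambda,L^\eps,j}\big(\langle\partial_{ij}^S X(\Pi),v\rangle^2\big)\le \frac{4C_v^2}{(l^\eps)^2}\sum_{k,k'}\Pp(k,k'\in S)=\frac{4C_v^2}{(l^\eps)^2}\,\E_{S}\big[|S|^2\big].
\]
Since every summand of $m^{\eps,L^\eps,v}$ is nonnegative and the above bound no longer depends on $\Pi$, the $\sup_\Pi$ may be taken for free, and it remains to show
\[
\frac1{\eps}\,\E_{{\cal L}^\eps}\Big[\tfrac1{(l^\eps)^2}\E_{S\sim{\cal F}_j^{L^\eps,\lambda}}[|S|^2]\Big]\xrightarrow[\eps\to0]{}0 .
\]

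Now split $\E_S[|S|^2]=\sum_k\Pp(k\in S)+\sum_{k\ne k'}\Pp(k,k'\in S)$. By Remark \ref{rem:single-locus}, $\Pp(k\in S)=1/n_j^\eps$, so the diagonal contributes $\frac1{l^\eps}\cdot\frac1{\eps n_j^\eps}\to 0$, using $\eps n_j^\eps\to N_j>0$ and $l^\eps\to\infty$. For the off-diagonal terms the key observation is that, once averaged over the \emph{i.i.d.\ uniform} positions ${\cal L}^\eps$, the two-locus joint fixation probability depends only on the genetic distance between the two loci (the fate of $\{k,k'\}$ is governed only by the recombination atoms between them): writing $\psi_j^\eps(r):=\Pp(k,k'\in{\cal F}_j)$ for two loci at distance $r$, one gets $\E_{{\cal L}^\eps}\Pp(k,k'\in S)=\int_0^1\int_0^1\psi_j^\eps(|x-y|)\,dx\,dy$ for every $k\ne k'$, so the off-diagonal contribution is at most $4C_v^2\sum_{i,j}M_{ij}\,\frac1\eps\int_0^1\int_0^1\psi_j^\eps(|x-y|)\,dx\,dy$. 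The whole problem thus reduces to $\frac1\eps\int_0^1\int_0^1\psi_j^\eps(|x-y|)\,dx\,dy\to 0$, which I would obtain by dominated convergence applied to $f_\eps(r):=\psi_j^\eps(r)/\eps$ on $[0,1]$: since $\psi_j^\eps(r)\le\Pp(k\in{\cal F}_j)=1/n_j^\eps$, one has $0\le f_\eps(r)\le 1/(\eps n_j^\eps)\le C$, a uniform integrable bound, while $\int_0^1\int_0^1 f_\eps(|x-y|)\,dx\,dy\le 2\int_0^1 f_\eps(r)\,dr$.

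The main obstacle is the \emph{pointwise} convergence $f_\eps(r)\to 0$ for each fixed $r\in(0,1]$, equivalently (since $1/(\eps n_j^\eps)\to 1/N_j$) that $n_j^\eps\,\psi_j^\eps(r)=\Pp\big(k'\in{\cal F}_j\mid k\in{\cal F}_j\big)\to 0$: conditionally on one locus fixing the migrant allele, a second locus at fixed positive recombination distance must fix it with vanishing probability. I would argue genealogically. For a locus $\ell$ let $A_\ell$ denote the individual present at the arrival of the migrant from which the whole subpopulation descends at locus $\ell$ after fixation, so $\{\ell\in{\cal F}_j\}=\{A_\ell=\text{migrant}\}$. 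By neutrality the $n_j^\eps$ individuals are exchangeable in their ancestral roles, so $A_\ell$ is uniform, and conditioning on the common ancestor when $A_k=A_{k'}$ yields
\[
\Pp\big(k'\in{\cal F}_j\mid k\in{\cal F}_j\big)=\frac{\Pp(A_k=A_{k'}=\text{migrant})}{\Pp(A_k=\text{migrant})}=\Pp(A_k=A_{k'}).
\]
It then remains to show $\Pp(A_k=A_{k'})\to 0$ for fixed $r>0$, and this is exactly where $\lambda>0$ is essential: the fixation time is of order $n_j^\eps$, during which the two loci are separated by recombination at the fixed positive per-event probability $1-e^{-\lambda r}$, so on the coalescent time scale the between-loci recombination rate diverges and the two whole-population ancestral lineages decouple. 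Making this decoupling quantitative — for instance through a two-locus linkage-disequilibrium moment computation showing $\Pp(k,k'\in{\cal F}_j)=\E[X_\infty Y_\infty]=o(1/n_j^\eps)$, where $X,Y$ are the migrant-allele frequencies at the two loci — is the genuinely technical step, and it is precisely what forces $\lambda>0$ (for $\lambda=0$ the loci never recombine, $\Pp(A_k=A_{k'})\equiv 1$, and the statement fails).
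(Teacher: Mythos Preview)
Your outline is correct and matches the paper's route almost exactly: the mutation bound is identical, and for the migration term you arrive at the same reduction, namely controlling $\tfrac{1}{(l^\eps)^2\eps}\,\E_{{\cal L}^\eps}\E_S[|S|^2]$. Your exchangeability identity $\Pp(k'\in{\cal F}_j\mid k\in{\cal F}_j)=\Pp(A_k=A_{k'})$ is a clean way to phrase the target.

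The gap you yourself flag is precisely where the paper does the work. Rather than a forward linkage-disequilibrium moment computation, the paper uses the backward two-locus ancestral recombination graph (its Lemma~\ref{lemma:ARG}): the indicator $Y(t):=1_{\{A_k(t)=A_{k'}(t)\}}$ is a two-state Markov chain with split rate $r_{k,k'}=\tfrac12(1-e^{-\lambda|\ell_k-\ell_{k'}|})$ and coalescence rate $2/n_j^\eps$, so its stationary law gives the closed form
\[
\Pp\bigl(k,k'\in{\cal F}_j^{L^\eps,\lambda}\bigr)
=\frac{1}{n_j^\eps}\cdot\frac{2/n_j^\eps}{\,r_{k,k'}+2/n_j^\eps\,}.
\]
This immediately delivers your missing pointwise limit (for fixed $r>0$ one has $\Pp(A_k=A_{k'})=\tfrac{2/n_j^\eps}{r_{k,k'}+2/n_j^\eps}\to 0$), after which your dominated-convergence endgame goes through. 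The paper instead plugs the same formula in directly and bounds
\[
\frac{1}{(l^\eps)^2\eps}\,\E_{{\cal L}^\eps}\E_S[|S|^2]
\;\le\;\frac{2\eps}{(N_j-\eps)^2}\int_{[0,1]^2}\frac{dx\,dy}{\alpha|x-y|+2\eps/N_j}
\;=\;O\bigl(\eps\log(1/\eps)\bigr),
\]
computing the integral explicitly. Either endgame works once the two-locus formula is in hand; the only genuine missing ingredient in your write-up is that ARG computation. Your heuristic about the recombination rate diverging on the coalescent time scale is exactly the mechanism, and the two-state chain makes it rigorous in one line.
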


\begin{prop}\label{prop:tightness} For $T>0$, 
the family of random variables $(\xi^\eps; \ {\eps>0})$ is tight in the weak topology $D([0,T], \R^{{Bell}_d})$.
\end{prop}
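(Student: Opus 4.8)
The plan is to exploit the semimartingale decomposition produced by Lemmas~\ref{lem:gen-first-order} and~\ref{lem:bracket}, together with the vanishing of the bracket in Proposition~\ref{prop:conv:m}. Because $\R^{Bell_d}$ is finite dimensional, I would first reduce tightness of $(\xi^\eps;\ \eps>0)$ to tightness of the scalar processes $(\langle \xi_t^\eps, v\rangle;\ t\in[0,T])$ as $v$ runs over a spanning family of ${\cal M}_d$ -- for instance the Dirac masses $\{\delta_\pi : \pi\in{\cal P}_d\}$, which are exactly the coordinate maps. Throughout I work with the unconditional process under the filtration $\sigma({\cal L}^\eps)\vee{\cal H}_t^{{\cal L}^\eps}$; conditionally on $\{{\cal L}^\eps = L^\eps\}$, Lemma~\ref{lem:gen-first-order} furnishes the decomposition $\langle \xi_t^\eps, v\rangle = B_t^{\eps,v} + M_t^{\eps,v}$ into the drift $B_t^{\eps,v} = \int_0^t \langle {}^tG^\eps \xi_s^\eps, v\rangle\,ds$ and a martingale $M^{\eps,v}$.

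The drift is easy to control. The migration coefficients $M_{ij}/(\eps n_i^\eps)$ entering ${}^tG^\eps$ in~\eqref{Gepsdef} are bounded uniformly in $\eps$ because $\eps n_i^\eps\to N_i>0$, and the push-forwards ${\cal I}_{ij}*$, ${\cal S}_i*$ map the simplex of probability measures into itself; hence there is a deterministic constant $C$, independent of $\eps$, $s$ and of ${\cal L}^\eps$, with $\|{}^tG^\eps \xi_s^\eps\|\le C$. It follows that every realization of $t\mapsto B_t^{\eps,v}$ is $C\|v\|$-Lipschitz, starts from $0$ and stays bounded, so that $B^{\eps,v}$ lives almost surely in a fixed compact subset of $C([0,T],\R)$ (Arzel\`a--Ascoli). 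In particular $(B^{\eps,v};\ \eps>0)$ is automatically tight, with continuous limit points.

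The martingale part is where Proposition~\ref{prop:conv:m} enters. Averaging the bracket formula of Lemma~\ref{lem:bracket} over ${\cal L}^\eps$ gives
\begin{equation*}
\E\!\left[\langle M^{\eps,v}\rangle_T\right] \ = \ \E\!\left[\int_0^T m^{\eps,{\cal L}^\eps,v}(\Pi^{\eps,{\cal L}^\eps}(s))\,ds\right] \ \le \ T\,\E\!\left[\sup_{\Pi\in({\cal P}_d)^{l^\eps}} m^{\eps,{\cal L}^\eps,v}(\Pi)\right] \ \xrightarrow[\eps\to0]{}\ 0,
\end{equation*}
the last limit being exactly Proposition~\ref{prop:conv:m}. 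Doob's $L^2$ inequality then yields $\E[\sup_{t\le T}(M_t^{\eps,v})^2]\le 4\,\E[\langle M^{\eps,v}\rangle_T]\to 0$, so that $M^{\eps,v}\to 0$ uniformly on $[0,T]$ in probability. The scalar process $(\langle \xi_t^\eps, v\rangle;\ t\in[0,T])$ is therefore tight, being the sum of the tight family $(B^{\eps,v})$ and a family converging uniformly to $0$; moreover every subsequential limit coincides with a limit of the continuous processes $B^{\eps,v}$ and is hence continuous, so the family is in fact $C$-tight. Since $C$-tightness is stable under finite products, $(\xi^\eps;\ \eps>0)$ is $C$-tight, and in particular tight, in $D([0,T],\R^{Bell_d})$. (Equivalently, one verifies directly the Aldous conditions on $B^{\eps,v}$ and on $\langle M^{\eps,v}\rangle$ required by the Aldous--Rebolledo criterion.)

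Granting Proposition~\ref{prop:conv:m}, the present statement is essentially bookkeeping, and the only delicate point is to set up the martingale problem for the unconditional process so that Lemma~\ref{lem:bracket} may legitimately be averaged over the random loci ${\cal L}^\eps$. The genuine difficulty in this part of the analysis lies not here but in Proposition~\ref{prop:conv:m} itself: there one must show that the $1/\eps$ amplification of the migration rate is overcome by the $1/l^\eps$ smallness of the per-migration increments of the empirical measure. Because loci do not fix independently, this hinges on a uniform control of the pairwise fixation probabilities $\Pp(k,k'\in{\cal F}_j^{L^\eps,\lambda})$, i.e.\ on quantifying the decorrelation of loci along the chromosome.
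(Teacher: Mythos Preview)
Your argument is correct and in fact slightly more economical than the paper's. Both reduce to the scalar processes $\langle\xi^\eps,v\rangle$ and use the same semimartingale decomposition and the same Lipschitz bound on the drift. The difference is in the martingale part: the paper applies the Aldous criterion directly, which requires a bound of the form $\E[\langle M^{\eps,v}\rangle_{\sigma^\eps}-\langle M^{\eps,v}\rangle_{\tau^\eps}]\le C_1\delta$ with $C_1$ \emph{uniform in $\eps$}. For this the paper proves the separate estimate $\sup_{L^\eps}\sup_{\Pi} m^{\eps,L^\eps,v}(\Pi)\le C_1$ (their equation~\eqref{majorationm}), which is a boundedness statement distinct from the convergence statement of Proposition~\ref{prop:conv:m}. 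You instead use Proposition~\ref{prop:conv:m} at full strength to show that the martingale part vanishes uniformly via Doob, which makes tightness immediate and avoids the extra estimate altogether. The price is a logical dependence of Proposition~\ref{prop:tightness} on Proposition~\ref{prop:conv:m}, which is harmless here since the latter is proved first.

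One small slip: as written, $M_0^{\eps,v}=\langle\xi_0^\eps,v\rangle$ need not be zero, so Doob gives $\E\bigl[\sup_{t\le T}(M_t^{\eps,v}-M_0^{\eps,v})^2\bigr]\le 4\,\E[\langle M^{\eps,v}\rangle_T]\to 0$ rather than $M^{\eps,v}\to 0$. The fix is trivial---absorb the bounded constant $M_0^{\eps,v}$ into the ``tight'' part together with $B^{\eps,v}$---and the rest of the argument is unaffected. Your appeal to the stability of $C$-tightness under finite products is the right way to pass from coordinates back to $D([0,T],\R^{Bell_d})$, since plain $D$-tightness of coordinates would not suffice.
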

We postpone the proof of Propositions \ref{prop:conv:m} and \ref{prop:tightness} until Sections \ref{section:conv-m} and \ref{section:tightness} respectively. 

\begin{proof}[Proof of Theorem \ref{thm31} based on Proposition \ref{prop:conv:m} and \ref{prop:tightness}]
Since $(\xi^\eps; \ {\eps>0})$ is tight, we can always extract a subsequence 
converging in distribution (for the weak topology) to a limiting random measure process $\xi$.  We will now show that 
$\xi$ can only be the solution of the Kolmogorov equation alluded to in Theorem \ref{thm31}.
From \eqref{Gepsdef}, for every probability measure $m$ on  ${\cal P}_d$, for every $v\in {\cal M}_d$, 
\beqnn
\left| \ \left<^t G^\eps m, v\right>  \ \right| & \leq & (2 \sum \limits_{i,j=1}^d M_{ij} \ \frac{1}{\eps n^{\eps}_i} + 2 b_{\infty} d ) ||v||_\infty, \label{eq:bound-gen}
\eeqnn
where $|| v ||_{\infty} := \max_{\pi\in{\cal P}_N} v(\pi)$.
Since as $\eps\to0$,  $ n_i^\eps \eps \to  N_i $,  the term between parentheses also converges, and thus  
the RHS is uniformly bounded in $\eps$. Finally, the bounded convergence theorem implies that for every $v\in {\cal M}_d$,
\begin{equation*}
\E\left(\left<\xi_t,v\right> - \int_0^t \left<^tG \xi_s,v\right> ds \right)^2 \ = \ \lim_{\eps\to0} \  \E \left( \left (\left<\xi_t^{\eps},v\right> - \int_0^t \left<^t G^\eps \xi^{\eps}_s,v\right> ds \right)^2 \right ),
\end{equation*} 
where we used the fact that $^t G^\eps m \ \to \ ^t G m$ for every $m\in{\cal M}_d$ (where $G$ is defined as in Theorem \ref{thm31}).
On the other hand, since
\begin{eqnarray*}
\E\left(\left<\xi_t^{\eps},v\right> - \int_0^t \left<^t G^\eps \xi^{\eps}_s,v\right> ds \right)^2  \ &=& \E \left (\E \left( \left (\left<\xi_t^{\eps,  {\cal L}^\eps},v\right> - \int_0^t \left<^t G^\eps \xi^{\eps,  {\cal L}^\eps }_s,v\right> ds \right)^2 \lvert \ {\cal L}^\eps \right ) \right ) \nonumber \\
 &=& \ \E( \E(\left<M^{\eps,{\cal L}^\eps,v}\right>_t \lvert \ {\cal L}^\eps))\nonumber \\
&=& \E( \E ( \int_0^t \ m^{\eps, {\cal L}^\eps,v}(\Pi^{\eps,{\cal L}^\eps}(s) ) ds \ \lvert \ {\cal L}^\eps  )) \nonumber \\
&\le&  t  \  \E(\ \sup_{\pi \in ({\cal P}_d)^{l^\eps}}m^{\eps, {\cal L}^\eps,v}(\Pi)).   \nonumber \\
\end{eqnarray*}
Lemma \ref{lem:bracket} and
Proposition \ref{prop:conv:m} imply that 
$$
\E\left(\left<\xi_t,v\right> - \int_0^t \left<^tG \xi_s,v\right> ds \right)^2  \ = \ 0,
$$
which ends the proof of Theorem \ref{thm31}.
\end{proof}
\bigskip

\subsection{Proof of Proposition \ref{prop:conv:m}}
\label{section:conv-m}
Our first step in proving Proposition \ref{prop:conv:m} is to prove the following result.

\begin{lemma}\label{lemma:ARG}$\forall \ j \ \in  E$, $\forall \lambda >0$, 
\begin{eqnarray*} 
\lim \limits_{\eps \to 0 } \  \frac1{(l^{\eps})^2 \eps} \  \E \left ( \E_{\lambda,{\cal L}^\eps,j} ( \sum  \limits_{k=1}^{l^{\eps}} \sum \limits_{k'=1}^{l^\eps}   1_{k \in S}  1_{k' \in S} \ \lvert \ {\cal L}^\eps ) \right ) \ = \ 0. \end{eqnarray*}
\end{lemma}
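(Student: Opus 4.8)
The plan is to recognise the double sum as $|S|^2$ with $S\sim{\cal F}_j^{{\cal L}^\eps,\lambda}$ and to split it into diagonal and off-diagonal contributions. Writing $n:=n_j^\eps$, the diagonal part $\sum_k 1_{k\in S}=|S|$ has conditional expectation $\E_{\lambda,L^\eps,j}(|S|)=l^\eps/n$ by Remark \ref{rem:single-locus}, so its contribution to the quantity of interest is $\frac{1}{(l^\eps)^2\eps}\cdot\frac{l^\eps}{n}=\frac{1}{l^\eps\,\eps n}$, which vanishes since $\eps n\to N_j$ and $l^\eps\to\infty$. Everything therefore reduces to controlling the off-diagonal sum $\sum_{k\ne k'}\Pp(k\in S,\,k'\in S)$.

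The key structural step is a genealogical identity. Let $A_k$ be the unique time-$0$ individual from which the eventually homogeneous subpopulation descends at locus $k$, so that $k\in S$ precisely when $A_k$ is the migrant. Since both loci can carry the migrant type only when $A_k=A_{k'}=\text{migrant}$, and since the law of the pair $(A_k,A_{k'})$ of founders is exchangeable (the neutral Moran dynamics ignore the labelling of the $n$ initial individuals), one obtains $\Pp(k\in S,\,k'\in S)=\frac1n\,\phi_n(r_{kk'})$, where $r_{kk'}=|x_k-x_{k'}|$ and $\phi_n(r):=\Pp(A_k=A_{k'})$ is a purely genealogical quantity depending only on the distance between the two loci. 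Averaging over the i.i.d.\ uniform positions (for $k\ne k'$), and using that the distance between two independent uniforms has density $2(1-r)$ on $[0,1]$, the off-diagonal contribution equals $\frac{l^\eps(l^\eps-1)}{(l^\eps)^2}\cdot\frac{1}{\eps n}\int_0^1 2(1-r)\,\phi_n(r)\,dr$. As $\frac{l^\eps(l^\eps-1)}{(l^\eps)^2}\to1$ and $\frac{1}{\eps n}\to 1/N_j$, it suffices to show $\int_0^1 2(1-r)\phi_n(r)\,dr\to 0$; since $0\le\phi_n\le1$, dominated convergence reduces this to the pointwise statement $\phi_n(r)\to 0$ for each fixed $r>0$.

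Proving $\phi_n(r)\to0$ is the heart of the argument and the main obstacle. I would trace back the locus-$k$ and locus-$k'$ ancestral lineages of a single present-day individual and follow the two-state process $\sigma$ recording whether they currently sit in the same individual (\emph{together}) or in two distinct ones (\emph{apart}); then $\phi_n(r)$ is the probability of being \emph{together} at time $0$. In the \emph{together} state the common individual is resampled backward at rate $1$, and its two loci trace back to distinct parents with probability $1-e^{-\lambda r}$, so \emph{together}$\,\to\,$\emph{apart} occurs at the $n$-independent rate $\mu:=1-e^{-\lambda r}>0$. In the \emph{apart} state the two lineages can recoalesce only through one specific birth event, which happens at rate $\kappa_n=\Theta(1/n)\to0$. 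Because the locus-wise genealogical fixation that defines $A_k$ takes place on the $O(n)$ time scale, $\phi_n(r)$ is governed, as the observation time grows, by the stationary \emph{together} mass of this two-state chain, namely $\kappa_n/(\mu+\kappa_n)$, which tends to $0$ since $\mu$ is fixed and $\kappa_n\to0$. This is exactly the statement that distant loci decorrelate in a large population, and it closes the argument.
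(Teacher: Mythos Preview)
Your argument is correct and rests on the same genealogical engine as the paper's: the two-locus ancestral recombination graph, reduced to a two-state \emph{together}/\emph{apart} chain whose stationary \emph{together} probability gives $\phi_n(r)$. The paper makes this explicit, obtaining $\phi_n(r)=\dfrac{2/n_j^\eps}{\frac{1}{2}(1-e^{-\lambda r})+2/n_j^\eps}$ (your separation rate $\mu$ is off by a factor $1/2$, which is harmless here), and then bounds $\frac{1}{2}(1-e^{-\lambda r})\ge \alpha r$ to compute the resulting integral $\int_{[0,1]^2}\frac{ds\,dt}{\alpha|s-t|+2\eps/N_j}$ explicitly, showing it is $O(\eps\log(1/\eps))$. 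Your route is cleaner for the bare statement of the lemma: once you have $\Pp(k,k'\in S)=\frac{1}{n}\phi_n(|x_k-x_{k'}|)$ by exchangeability, averaging over the i.i.d.\ uniform loci and applying dominated convergence (with $0\le\phi_n\le 1$ and $\phi_n(r)\to 0$ for each $r>0$) avoids the explicit integration entirely. The trade-off is that the paper's computation yields the quantitative rate $O(\eps\log(1/\eps))$, which is reused later for the heuristic on the magnitude of fluctuations (Remark following Proposition~\ref{prop:conv:m}) and, via the uniform pointwise bound, in the tightness argument of Section~\ref{section:tightness}; your dominated-convergence step gives no such rate. One stylistic point: your sentence about $\phi_n(r)$ being ``governed by the stationary \emph{together} mass because fixation takes place on the $O(n)$ time scale'' is slightly misleading---what you actually use is simply that $\phi_n(r)=\lim_{t\to\infty}\Pp(Y(t)=1)$ equals the stationary probability $\kappa_n/(\mu+\kappa_n)$ by ergodicity of the two-state chain; the $O(n)$ time scale plays no role in that identification.
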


Before turning to the proof of this result, we recall the definition of the ancestral recombination graph (ARG) (see also \cite{hudson}, \cite{ARG1}, \cite{ARG2}) for the case of two loci.
Fix ${L}^\eps = \{ \ell_1, \dots  \ell_{l^\eps}\}$ the positions of the loci in the chromosome, $\lambda$ the recombination rate, and choose two loci $k$ and ${k'}$ among the $l^\eps$ loci. In order to compute the probability that for both loci the allele from the migrant is fixed in the host population -- $\mathbb{P}(k,k' \in {\cal F}_{j}^{L^\eps, \lambda})$ --  we  follow backwards in time the genealogy of the corresponding alleles carried by a reference individual in the present population, assuming that a migration event occurred in the past (sufficiently many generations ago, so that we can assume that the present population is homogeneous). 
\begin{figure}[h]
\begin{center}
\subfigure[Moran model with recombination]{\includegraphics[height=8cm]{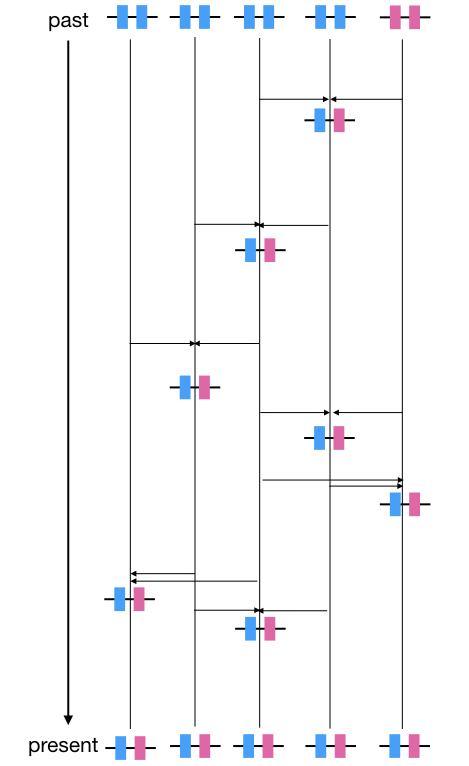}} \qquad
\subfigure[ARG with 2 loci]{\includegraphics[height=8cm]{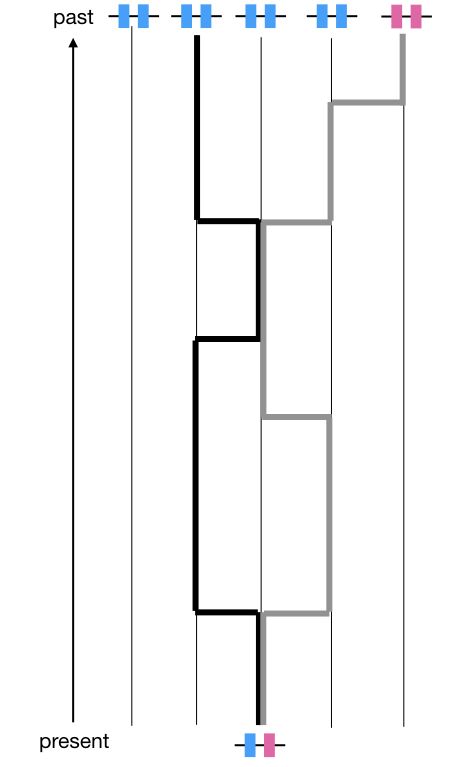}}\qquad
\caption{Realisation of a Moran model with recombination and the corresponding ARG. The population size is equal to 5 and $l^\eps=2$. The colors of the loci represent their origin: loci from the resident subpopulation are represented in blue whereas loci from the migrant are represented in pink.   In (a) the origins of the arrows indicate the parents, and the tips of the arrows point to their offspring. Time goes from top to bottom as indicated by the arrow on the left. In (b) the black (resp. grey) line corresponds to the ancestral lineages of the first (resp. second) locus in the chromosome of an individual sampled in the extant population. Time goes backwards, from  bottom to  top. The first locus has been inherited from the resident subpopulation whereas the second locus has been inherited from the migrant (i.e. $1 \ne S, \ 2 \in S$). } 
\label{fig2}
\end{center}
\end{figure}
More precisely, at locus $k$, we consider the ancestral lineage of a reference individual (chosen uniformly at random) in the extant population. We envision this lineage as a particle moving in $\{1,\dots,n^{\eps}_j\}$: time $t=0$ corresponds to the present, and the position of the particle at time $t$ -- denoted by $A^{L^\eps, \lambda, j}_k(t)$ -- identifies the ancestor of locus $k$, $t$ units of time in the past (i.e., at locus $k$, the reference individual  in the extant population inherits its genetic material from individual $A^{L^\eps, \lambda, j}_k(t)$ at time $-t$) (see Figure \ref{fig2}). 

The recombination rate between the two loci, $k$ and $k'$, $r^{L^\eps, \lambda}_{k,k'}$ corresponds to the probability that there is at least one Poisson point between $ \ell_{k}$ and $ \ell_{k'}$ and the two fragments are inherited from different parents and is given by 
\be\label{rate-of-recombination}
r^{L^\eps, \lambda}_{k,k'} := \frac12 \left (1- \exp(-\lambda | \ell_k -  \ell_{k'}|) \right ).
\ee
  
 ${\bf A}^{L^\eps, \lambda, j}=(A^{L^\eps, \lambda, j}_k, \   A^{L^\eps, \lambda, j}_{k'})$ defines a $2$-dimensional stochastic process on $\{1,\dots,n_j^\eps\}$. At time $0$, the two particles have the same position (they coincide at a randomly chosen individual as in Figure \ref{fig2}) and then evolve according to the following dynamics:
 \begin{itemize}
 \item When both particles are occupying the same location $z$, the group splits into two at rate $r^{L^\eps, \lambda}_{k,k'}$ (see \eqref{rate-of-recombination}). Forward in time this corresponds to a reproduction event where $z$ is replaced by the offspring of $x$  and $y$. Each individual $x$ reproduces at rate 1 (chooses a random partner $y$), and with probability $1/n_j^\eps$ his offspring replaces individual $z$. There are $n_j^\eps$ possible choices for  $x$. Following   \eqref{rate-of-recombination}, the probability that both loci are inherited from different parents is $ r^{L^\eps, \lambda}_{k,k'}$,  so the rate of fragmentation for loci $k,k'$ is given by $n_j^\eps \ .\  \frac1{n_j^\eps} \ . \  r^{L^\eps, \lambda}_{k,k'}$. 
 \item When the two particles are occupying different positions, they jump to the same position at rate $2/n_j^\eps$. 
 Forwards in time, this corresponds to a reproduction event where the individual located at $A^{L^\eps, \lambda, j}_k$ (resp. $A^{L^\eps, \lambda, j}_{k'}$) replaces the one at $A^{L^\eps, \lambda, j}_{k'}$ (resp. $A^{L^\eps, \lambda, j}_k$), and the offspring inherits the allele at locus $k'$ (resp. $k$) from this parent.
 A reproduction event where the individual located at $A^{L^\eps, \lambda, j}_k$ (resp. $A^{L^\eps, \lambda, j}_{k'}$) replaces the one at $A^{L^\eps, \lambda, j}_{k'}$ (resp. $A^{L^\eps, \lambda, j}_k$) occurs at rate $2/n_j^\eps$ (as the individual at $A^{L^\eps, \lambda, j}_k$ --resp. $A^{L^\eps, \lambda, j}_{k'}$-- can be the mother or the father); and the probability that  the offspring inherits the locus $k'$ (resp. $k$) from this parent is $1/2$. The total rate of coalescence is $ 2 \ . \ \frac2{n_j^\eps} \ . \ \frac12$. 
 \end{itemize}

Since we assume that the migration event occurred far back in the past, the following duality relation holds:
\be\label{eq:duality}
{\mathbb P}(k,k' \in {\cal F}_{j}^{L^\eps, \lambda}) \ = \ \lim_{t\to\infty} \ {\mathbb P}\left( \ A^{L^\eps, \lambda, j}_k(t) = A^{L^\eps, \lambda, j}_{k'}(t) = 1  \right).
\ee
 In other words, assuming that the migrant is labelled 1, the set on the RHS corresponds to the set of loci inheriting their genetic material from the migrant.
  
 \begin{proof}[Proof of Lemma \ref{lemma:ARG}]
Define $(Y^{L^\eps, \lambda, j}(t):=1_{A^{L^\eps, \lambda, j}_k(t)=A^{L^\eps, \lambda, j}_{k'}(t)}; t\geq0)$. It is easy to see from the previous description of the dynamics that $Y$ is a Markov chain on $\{0,1\}$ with the following transition rates:
$$
 q_{1,0} =  r^{L^\eps, \lambda}_{k,k'}, \ \ q_{0,1} = \frac2{n_j^\eps}
 $$
 and further
 \begin{itemize}
\item conditional on $Y^{L^\eps, \lambda, j}(t)=1$,  the two lineages $(A^{L^\eps, \lambda, j}_k(t),A^{L^\eps, \lambda, j}_{k'}(t))$ occupy a common position that is distributed  as a uniform random variable on $\{1,\dots,n_j^\eps\}$.
\item conditional on $Y^{L^\eps, \lambda, j}(t)=0$, $(A^{L^\eps, \lambda, j}_k(t),A^{L^\eps, \lambda, j}_{k'}(t))$ are distinct and are distributed as a two uniformly sampled random variables (without replacement)  on $\{1,\dots,n_j^\eps\}$.
\end{itemize}

We have:
\begin{equation*}
 \mathbb{P} (A^{L^\eps, \lambda, j}_k(t) =  A^{L^\eps, \lambda, j}_{k'}(t) = 1)  =  \mathbb{P} \left(Y^{L^\eps, \lambda, j}(t)=1\right) \frac{1}{n_j^\eps}\ . 
\end{equation*}
Furthermore, it is straightforward to show that 
$$ \lim_{t\to\infty} \mathbb{P} \left(Y^{L^\eps, \lambda, j}(t)=1\right) \ = \ \frac2{n_j^\eps}\frac1{ r^{L^\eps, \lambda}_{k,k'} + \frac2{n_j^\eps}}. $$ 
From (\ref{eq:duality}), we get that,
\begin{eqnarray*}
&&   \  \E_{\lambda,L^\eps,j} ( \sum  \limits_{k, k' \in \{1, \dots, l^\eps\}}  1_{k \in S}  1_{k' \in  S}) \  = \  \lim \limits_{t \to \infty}  \sum  \limits_{k, k' \in \{1, \dots, l^\eps\}}    \mathbb{P} (A^{L^\eps, \lambda, j}_k(t) = A^{L^\eps, \lambda, j}_{k'}(t)=1) \nonumber \\ 
& = & \frac2{(n^{\eps}_j)^2}  \sum_{k, k' \in \{ 1,\dots, l^\eps \}} \frac{1}{\frac12(1-e^{-\lambda| \ell_k- \ell_{k'}|}) + \frac2{n^{\eps}_j}}.      \nonumber
\end{eqnarray*}
One can then easily check that, $\exists \  \alpha > 0$ such that, for every ${  L}^\eps = \{  \ell_1, \dots  \ell_{l^\eps}\}$,
\begin{equation}
   \  \E_{\lambda,L^\eps,j} ( \sum  \limits_{k, k' \in \{1, \dots, l^\eps\}}  1_{k \in S}  1_{k' \in  S}) \ \leq \    \frac2{(n^{\eps}_j)^2}  \sum_{k, k' \in \{1, \dots, l^\eps\}} \frac{1}{\alpha | \ell_k- \ell_{k'}| + \frac2{n^{\eps}_j}}. 
 \label{majoration-alpha}
\end{equation}
Thus,
\begin{eqnarray*}
\E \left ( \E_{\lambda,{\cal L}^\eps, j} ( \sum  \limits_{k=1}^{l^{\eps}} \sum \limits_{k'=1}^{l^\eps}   1_{k \in S}  1_{k' \in S}  \ \lvert \ {\cal L}^{\eps})\right ) & \le &   \frac2{(n^{\eps}_j)^2}  \int_{[0,1]^{l^\eps}} dx_1, \dots dx_{l^\eps} \sum_{k, k' \in \{1, \dots, l^\eps\}} \frac{1}{\alpha |x_k-x_{k'}| + \frac2{n^{\eps}_j}} \\ \nonumber 
& \le &  \frac2{(n^{\eps}_j)^2}  \sum_{k, k' \in \{1, \dots, l^\eps\}} \int_{[0,1]^2}\frac{dx_k dx_{k'}}{\alpha |x_k-x_{k'}| + \frac2{n^{\eps}_j}}.   \nonumber
\end{eqnarray*}
In addition, using the fact that $n_j^\eps  = [N_{j}/\eps] $,
\begin{eqnarray*}
&& \frac{1}{(l^\eps)^2 \eps}  \E \left ( \E_{\lambda,{\cal L}^\eps,j} ( \sum  \limits_{k=1}^{l^{\eps}} \sum \limits_{k'=1}^{l^\eps}   1_{k \in S}  1_{k' \in S} \ \lvert \ {\cal L}^{\eps} )\right ) \nonumber \\
&\le & \   \frac{2 \eps }{(N_j-\eps)^2} \int_{[0,1]^2}  \frac{dt ds}{\alpha|t-s| +2\eps/N_j}    \nonumber \\
& =&  \frac{4 \eps }{(N_j-\eps)^2} \int_0^1 ds \int_o^s  \frac{dt}{\alpha|t-s| +2\eps/N_j}  \  \nonumber \\
&= &  \frac{4 \eps }{\alpha (N_j-\eps)^2} \int_0^1   \log \left (\frac{\alpha N_j}{2\eps} s + 1 \right )ds  \    \nonumber \\
& =  &\frac{4\eps}{ \alpha (N_j-\eps)^2}    \left ((1+  \frac{ 2\eps}{\alpha N_j}) \log \left (\frac{\alpha N_j}{2\eps }  + 1 \right) -1 \    \right ) \   \nonumber \\
 &\underset{\eps \to 0}{\longrightarrow}& 0. \nonumber 
\end{eqnarray*}
 \end{proof}

We are now ready to prove Proposition \ref{prop:conv:m}.

\begin{proof}[Proof of Proposition \ref{prop:conv:m}]
Using the definition given in Lemma \ref{lem:bracket},
$$
m^{\eps, L^\eps,v}(\Pi) = \frac1{\eps } \sum_{i,j= 1}^d \ M_{ij} \ \E_{\lambda,L^\eps,j} \left( \left<\partial_{ij}^S X(\Pi),v\right>^2\right)  \ +   \ b_{\infty} l^\eps \sum \limits_{i=1}^d \E_{l^\eps}\left(\left<\partial_i^K X(\Pi),v \right>^2\right).
$$

To bound the second term on the RHS, we note that, by definition, ${\cal S}_i^k(\Pi)$ and $\Pi$ only differ in one component, so from the definition of $\partial_i^K X(\Pi)$ (see \eqref{defdronde}), it is not hard to see that 
  \begin{equation*}
   \left<\partial_i^K X(\Pi),v \right> ^2 \ \leq \  \frac{4}{(l^\eps)^2} || v ||_{\infty}^2.
  \end{equation*}
  It follows that,
\begin{equation}
 {b_{\infty} l^\eps} \ \E_{l^\eps}(\left<\partial_i^K X(\Pi),v\right>^2) \leq \frac{4  b_{\infty}}{l^\eps } || v ||_{\infty}^2.  
 \label{majoration-rhs2} \end{equation}
 Since  $l^{\eps} \to \infty$ as $\eps \to 0$, this term converges and can be bounded from above, uniformly in $\Pi$ and $\eps\in(0,1)$. Note that this bound does not depend on the choice of ${L}^\eps$.

For the second term on the RHS, we simply note that expanding $ \frac1{\eps }  \E_{\lambda,L^\eps,j} \left(\left<\partial_{ij}^S X(\Pi),v\right>^2\right)$ (see   \eqref{partial_v}), yields a sum of four terms  that can be upper bounded by
$$
\frac{ || v ||_{\infty}^2}{(l^\eps)^2 \eps   }\E_{\lambda, L^\eps, j} (|k \in S, \ \Pi_k \in p_1| \ |k \in S, \ \Pi_k \in p_2|),
$$ where $p_1$ and $p_2$ are alternatively replaced by $\{\pi\}, {\cal I}_{ij}^{-1}(\pi)$ with $\pi\in{\cal P}_d$. Finally, $\forall L^\eps \in [0,1]^{l^\eps}$,
\begin{eqnarray}
&&\frac{ \E_{\lambda,L^\eps,j} (|k \in S, \ \Pi_k \in p_1| \ |k \in S, \ \Pi_k \in p_2|)}{(l^\eps)^2 \eps }\nonumber \\
  &=& \frac1{(l^\eps)^2 \eps}  \E_{\lambda,L^\eps,j}(\sum \limits_{k=1}^{l^\eps} 1_{\Pi_k \in p_1} 1_{k \in S}\sum \limits_{k'=1}^{l^\eps} 1_{\Pi_{k'} \in p_2} 1_{k' \in S}) \nonumber \\
&=& \frac1{(l^\eps)^2 \eps } \sum  \limits_{k=1}^{l^\eps} \sum \limits_{k'=1}^{l^\eps}  1_{\Pi_k \in p_1} 1_{\Pi_{k'}\in p_2}  \E_{\lambda,L^\eps,j} (1_{k \in S}  1_{k' \in S}) \nonumber \\
&\le& \frac1{(l^\eps)^2 \eps }   \E_{\lambda,L^\eps,j} ( \sum  \limits_{k=1}^{l^\eps} \sum \limits_{k'=1}^{l^\eps}   1_{k \in S}  1_{k' \in S}) \nonumber \\
\label{majoration-rhs}
\end{eqnarray}
randomising the positions of the loci and using Lemma \ref{lemma:ARG} the term on the RHS also converges and can also be bounded from above, which completes the proof.
\end{proof}

\begin{rem}[Magnitude of the stochastic fluctuations]
Lemma \ref{lem:gen-first-order} and the proof Proposition \ref{prop:conv:m} entail that:
$$\forall v\in {\cal M}_d,  \ \ \E(\left<M^{\eps,{\cal L}_\eps,v}\right>_t) \ \le \eps \log(1/\eps) C + \frac1{l^\eps} C'$$
where $C, C'$ are constants.
This suggests that the order of magnitude of the fluctuations should be of the order of  $\max(\sqrt{\eps \log(1/\eps)}, \sqrt(1/{l^\eps}))$.

In \cite{Yama}, the authors proposed a diffusion approximation (only for the case of two subpopulations). Their approximation is based on the simplifying hypothesis that loci are fixed independently on each other -- the number of fixed loci (after each migration event) follows a binomial distribution--, and the hypothesis that the number of loci $l$ is s.t. $l >> \frac1{\eps}$. They found that the magnitude of the stochastic fluctuations was $\sqrt{\eps}$. 

In summary, the previous heuristics suggest that taking into account correlations between loci increases the magnitude of the stochastic fluctuations. \end{rem}

\subsection{Tightness: Proof of Proposition \ref{prop:tightness}} \label{section:tightness}
 We follow closely \cite{fournier2004}. 
It is sufficient to prove that for every $v\in{\cal M}_d$, the projected process $\left( \left<\xi^\eps,v\right>; \ \eps>0 \right)$ is tight. To this end, we use Aldous criterium (see \cite{aldous1989}). In the following, we define
\begin{equation*}
M^{\eps,v}_t := \left<\xi_t^\eps,v\right> \ - \  \int_0^t \left<^t G^\eps \xi_s^\eps,v\right> ds, \ \  B^{\eps,v}_t \ := \  \int_0^t \left<^t G^\eps \xi_s^\eps,v\right> ds.
\end{equation*}
We first note that
$$
\sup_{t\in[0,T]} |\left<\xi^\eps_t,v\right>| \ \leq \ || v ||_{\infty}, 
$$ 
which implies that that for every deterministic $t\in[0,T]$, the sequence of random variables  $(\left<\xi^\eps_t,v\right>; \eps>0)$
is tight. Thus, the first part of Aldous criterium is sastified. Next,
fix $\delta>0$, and take two stopping times $\tau^\eps$ and $\sigma^\eps$ with respect to $({\cal H}^{\eps}_t)_{t\ge 0}$ the filtration generated by $(\Pi^{\eps, {\cal L}^\eps}_t, t\ge0)$, such that $0\leq \tau^\eps\leq \sigma^\eps \leq \tau^\eps +\delta\leq T$. Since 
$
\left<\xi_t^{\eps},v\right>  \ = \  M^{\eps,v}_t \ + \ B^{\eps,v}_t,
$
it is enough to show that the quantities
\[\E\left(\left|M^{\eps, v}_{\sigma^\eps} - M^{\eps, v}_{\tau^\eps}\right| \right) \ \ \mbox{and} \ \ \E \left( \left| B_{\sigma^\eps}^{\eps,  v} \ - \ B_{\tau^\eps}^{\eps, v} \right| \right ) \]
are bounded from above by two functions in $\delta$ (uniformly in the choice of $\tau^\eps,\sigma^\eps$ and $\eps$) going to $0$ as $\delta$ go to $0$.

The rest of the proof is dedicated to proving those two inequalities. We start with the martingale part.
First, 
\begin{eqnarray*}
\E\left(\left|M^{\eps, v}_{\sigma^\eps} - M^{\eps, v}_{\tau^\eps}\right| \right)^2 
& \leq & \E \left( \left(M^{\eps, v}_{\sigma^\eps} - M^{\eps,  v}_{\tau^\eps}\right)^2 \right) 
\end{eqnarray*} 
Recall that $\forall \ L^\eps \in [0,1]^{l^\eps}$, $M^{\eps, L^\eps, v}$ is a martingale. Thus, $M^{\eps, v}$ is a martingale with respect to   $({\cal G}_t^{\eps})_{t\ge 0} = ({\cal H}^{\eps}_t)_{t\ge 0} \vee \sigma({\cal L}^\eps)$, where $({\cal H}^{\eps}_t)_{t\ge 0}$ is the filtration generated by $(\Pi^{\eps, {\cal L}^\eps}_t)$. As $({\cal H}^\eps_t)_{t\ge0} \subset ({\cal G}^\eps_t)_{t \ge0}$, $\tau^\eps$ and $\sigma^\eps$ are also stopping times for the filtration  $({\cal G}^{\eps}_t)_{t\ge 0}$, so that
\begin{eqnarray*}
\E\left(\left|M^{\eps, v}_{\sigma^\eps} - M^{\eps, v}_{\tau^\eps}\right| \right)^2
&  \leq &  \E \left( \E (\left(M^{\eps, {\cal L}^\eps, v}_{\sigma^\eps} - M^{\eps, {\cal L}^\eps, v}_{\tau^\eps}\right)^2 \ | \ {\cal L}^\eps )\right) \\
& \leq  & \E \left( \E \left(\left<M^{\eps,{\cal L}^\eps, v}\right>_{\sigma^\eps} - \left<M^{\eps,{\cal L}^\eps, v}\right>_{\tau^\eps} \lvert \ {\cal L}^\eps\right)\right) \\
& = & \E \left(  \int_{\sigma^\eps}^{\tau^\eps} \ m^{\eps, {\cal L}^\eps,v}(\Pi^\eps(s) ) ds \right ).
\end{eqnarray*}
where $m^{\eps, L^\eps,v}(\Pi)$  was defined in Lemma \ref{lem:bracket} and where the second line follows from the fact that 
$\tau^\eps$ and $\sigma^\eps$ are stopping times for the filtration  $({\cal G}^{\eps}_t)_{t\ge 0}$.

If there exists $C_1$ such that \begin{equation}\sup_{L^\eps \in [0,1]^{l^\eps}}\sup_{\Pi \in ({\cal P}_d)^{l^\eps}} m^{\eps, L^\eps,v}(\Pi) \le C_1, \label{majorationm} \end{equation}
then, 
$$
 \E\left(\left|M^{\eps, v}_{\sigma^\eps} - M^{\eps, v}_{\tau^\eps}\right| \right)  \ \leq \ \sqrt{C_1} \sqrt{\delta},
$$
thus showing the desired inequality for the martingale part $M^{\eps, v}$.  To prove   \eqref{majorationm}, we recall the definition of $m^{\eps, L^\eps,v}(\Pi)$,
$$ m^{\eps, L^\eps,v}(\Pi) =  \frac1{\eps } \sum_{i,j= 1}^d \ M_{ij} \ \E_{\lambda,L^\eps,j} \left( \left<\partial_{ij}^S X(\Pi),v\right>^2\right)  \ +   \ b_{\infty}l^\eps \sum \limits_{i=1}^d \E_{l^\eps}\left(\left<\partial_i^K X(\Pi),v \right>^2\right).$$
The second term on the RHS can be bounded as in the proof of Proposition \ref{prop:conv:m}  (see \eqref{majoration-rhs2}). For the first term on the RHS, we use the bound given by   \eqref{majoration-rhs}. We only need to prove that   $\frac1{(l^{\eps})^2 \eps} \  \E_{\lambda,L^\eps,j} ( \sum  \limits_{k=1}^{l^{\eps}} \sum \limits_{k'=1}^{l^\eps}   1_{k \in S}  1_{k' \in S})$ is bounded.
Using   \eqref{majoration-alpha}, 
\begin{eqnarray*}
\frac1{(l^{\eps})^2 \eps} \  \E_{\lambda,L^\eps,j} ( \sum  \limits_{k=1}^{l^{\eps}} \sum \limits_{k'=1}^{l^\eps}   1_{k \in S}  1_{k' \in S}) \le \frac1{(l^{\eps})^2 \eps} \  \frac{(l^\eps)^2}{n_j^\eps} \underset{\eps \to 0}{\longrightarrow} N_j,
\end{eqnarray*}
so   \eqref{majorationm}  is proved.

\bigskip

We now turn to the drift part. First, for every $L^\eps \in [0,1]^{l^\eps}$,  
\begin{eqnarray*}
\left| B_{\sigma^\eps}^{\eps, L^\eps, v} \ - \ B_{\tau^\eps}^{\eps, L^\eps, v} \right| & \leq  & \int_{\tau^\eps}^{\sigma^\eps} | \left<^t G^\eps X(\Pi^{\eps,L^\eps}(s)),v\right> | ds.
\end{eqnarray*}
We already showed in \eqref{eq:bound-gen}, that the integrand on the RHS is uniformly bounded in $\eps$. Thus, there exists $C_2$ such that, for every $L^\eps \in [0,1]^{l^\eps}$: 
\begin{eqnarray*}
\left| B_{\sigma^\eps}^{\eps, L^\eps, v} \ - \ B_{\tau^\eps}^{\eps, L^\eps, v} \right| & \leq  & \delta C_2.
\end{eqnarray*}
So, 
\begin{eqnarray*}
\E \left( \left| B_{\sigma^\eps}^{\eps,  v} \ - \ B_{\tau^\eps}^{\eps, v} \right| \right ) &=& \E\left ( \E \left ((| B_{\sigma^\eps}^{\eps, {\cal L}^\eps, v} \ - \ B_{\tau^\eps}^{\eps, {\cal L}^\eps, v} |) \ \lvert \ {\cal L}^\eps   \right) \right)  \leq   \delta C_2.
\end{eqnarray*}
which is the desired inequality.
This completes the proof of Proposition \ref{prop:tightness}.
 
 \begin{rem}
Notice that the tightness (and the convergence) does not depend on the recombination rate. However, for small values of $\lambda$, or if $L^\eps$ is such that the positions of the loci are all very close to each other, correlations between loci are very high. This means that, when a migration event takes place, either no locus will be fixed (with high probability), or almost all loci from the migrant will be fixed. Therefore, if we let  $\lambda \to 0$ be the process of the genetics distances converges to a process that increases continuously (due to mutation) and has negative jumps (due to migration events).  See Figure \ref{jumps2}  for a numerical simulation. 
\end{rem}

\begin{figure}[h]  
\begin{center}
\includegraphics[width=10cm]{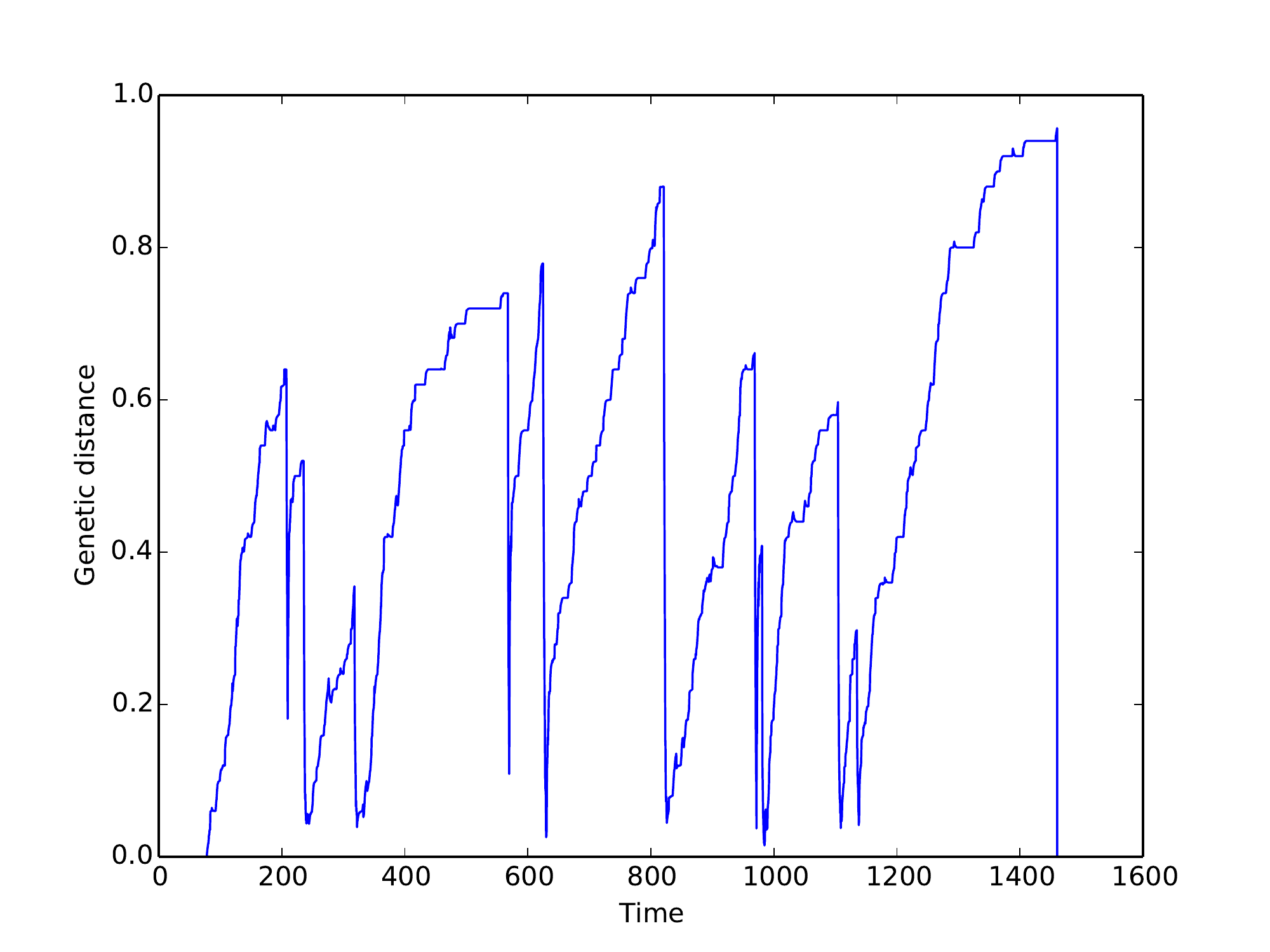}  
\caption{Simulation of the individual based model, for $d = 2$, $N_1 = N_2 = 1$, $\eps$ = 0.01, $\gamma = 0.005$, $l^\eps = 100$, $\lambda = 0.5$. With this set of parameters, Theorem \ref{thm-intro}, predicts that the genetic distance at equilibrium should be $0.5$. In this simulation, the mean genetic distance is $0.5$.}
\label{jumps2} 
\end{center}   
\end{figure}

\section{Proof of Theorem \ref{thm-intro} and more}
\label{sect:33}
In this section we state and prove a stronger version of Theorem \ref{thm-intro} (see Theorem \ref{teo-generak} below).
As in Theorem  \ref{thm-intro}, we consider, for each pair of subpopulations $i, j \in E$,  $S^i$ and $S^j$, two independent random walks on $E$ starting respectively from $i$ and $j$ and whose transition rate from $k$ to $p$ is equal to $\tilde M_{kp}$.
We have the following generalization of of Theorem \ref{thm-intro}.

\begin{theorem}\label{teo-generak}
Assume  that
\begin{itemize}
\item At time $0$, in the IBM, subpopulations are homogeneous and that, the genetic partition measure of the population (in the associated PBM) is given by $\xi^\eps_0$, a deterministic probability measure in  ${\cal P}_d$. 
\item There exists a probability measure $P^0 \in {\cal M}_d$ such that the following convergence holds: 
\begin{equation}\label{initial}
\xi^\eps_0 \underset{\eps \to 0}{\longrightarrow} P^0.
\end{equation}
\end{itemize}
For every $t\geq0$, define\
\begin{equation*}
D_t(i,j) \ := \ 1 \ - \  \int_0^t e^{-2b_\infty s} \Pp(\tau_{ij} \in ds) \ - \ \int_{\pi} e^{-2b_\infty t} \Pp\left(\tau_{ij} > t, S^i(t) \sim_{\pi} S^j(t)\right) P^0(d\pi)  \
\end{equation*}
where 
$\tau_{ij}=\inf\{t \geq0 \ : \ S^i(t) = S^j(t)\}.$
Then,
\begin{equation*}
\lim_{\eps \to 0} \ \lim_{\gamma \to 0} (d^{\gamma, \eps}_{t/(\gamma \eps)}(i,j), \ t \ge 0) \ = \ (D_t(i,j), \ t\ge 0)\  \textrm{ in the sense of finite dimensional distributions.}
\end{equation*}
In particular, 
\begin{equation*}
\lim_{t \to \infty} D_t(i,j) \ = \ 1 - \E(e^{-2 b_\infty \tau_{ij}}). 
\end{equation*}
\label{thm-main2}
\end{theorem}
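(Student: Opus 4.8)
The plan is to chain together the two asymptotic reductions already established and then to solve the limiting Kolmogorov equation explicitly by a genealogical (duality) argument. First I would dispose of the inner limit $\gamma\to0$: by Theorem~\ref{IBMtoPBM}, under the stated homogeneity of the initial condition and the matching $d^{\gamma,\eps}_0=d^\eps_0$, the rescaled IBM genetic distances $(d^{\gamma,\eps}_{t/(\gamma\eps)}(i,j))$ converge in the f.d.d.\ sense to the PBM distances $(d^\eps_t(i,j))$. It then remains to take $\eps\to0$ in the PBM. Using the identity $d^\eps_t(i,j)=1-\xi^\eps_t(\{\pi\in\cP_d : i\sim_\pi j\})$ together with Theorem~\ref{thm31}, the process $\xi^\eps$ converges in the Skorokhod topology to the deterministic flow $P$ solving $\frac{d}{ds}P_s={}^tG P_s$ with $P_0=P^0$; since $\mu\mapsto \mu(\{\pi: i\sim_\pi j\})$ is a continuous linear functional and the limit is deterministic, evaluating at finitely many continuity points and applying the continuous mapping theorem gives $d^\eps_t(i,j)\to 1-P_t(\{\pi:i\sim_\pi j\})$, jointly in the f.d.d.\ sense. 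Thus the whole theorem reduces to identifying $P_t(\{\pi: i\sim_\pi j\})$ with the two-term expression subtracted from $1$ in the definition of $D_t(i,j)$.

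The observation that makes this tractable is that the dynamics generated by $G$ preserve the family of pairwise relations. A direct computation with the generator~\eqref{def:G-inf} shows that, for $i\neq j$, applying $G$ to the indicator $\phi_{ij}:\pi\mapsto \mathbf 1_{\{i\sim_\pi j\}}$ produces only a mutation term $-2b_\infty\phi_{ij}$ (a mutation at $i$ or at $j$ breaks the relation, each at rate $b_\infty$) together with migration terms that replace $i$ by some $a$ at rate $\tilde M_{ia}$ or $j$ by $a$ at rate $\tilde M_{ja}$. Hence $q_t(i,j):=P_t(\{\pi: i\sim_\pi j\})$ obeys the closed linear system
\begin{equation*}
\frac{d}{dt}q_t(i,j)=-2b_\infty\, q_t(i,j)+\sum_{a}\tilde M_{ia}\big(q_t(a,j)-q_t(i,j)\big)+\sum_{a}\tilde M_{ja}\big(q_t(i,a)-q_t(i,j)\big),
\end{equation*}
with the convention $q_t(i,i)=1$ and initial data $q_0(i,j)=P^0(\{\pi: i\sim_\pi j\})$. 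This is exactly the backward equation for the pair of independent $\tilde M$-walks $(S^i,S^j)$, killed at total rate $2b_\infty$ and absorbed, with boundary value $1$, on the diagonal $\{S^i=S^j\}$.

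I would then identify the solution through its Feynman--Kac / duality representation, which carries the genealogical meaning alluded to after Theorem~\ref{thm-intro}: reading $S^i,S^j$ as the ancestral lineages of subpopulations $i$ and $j$, the two lineages coalesce precisely at the meeting time $\tau_{ij}$, mutations fall on each lineage at rate $b_\infty$, and $i$ and $j$ carry the same allele at time $t$ iff no mutation occurs on either lineage before coalescence. Conditioning on whether $\tau_{ij}\le t$ splits $q_t(i,j)$ into (i) the contribution of pairs that coalesce before time $t$, weighted by the no-mutation probability $e^{-2b_\infty s}$ at coalescence time $s=\tau_{ij}$, giving $\int_0^t e^{-2b_\infty s}\Pp(\tau_{ij}\in ds)$; and (ii) the contribution of pairs that have not coalesced, requiring no mutation on $[0,t]$ (factor $e^{-2b_\infty t}$) and that the ancestors $S^i(t),S^j(t)$ share a type under the initial partition, giving $\int_\pi e^{-2b_\infty t}\Pp(\tau_{ij}>t,\,S^i(t)\sim_\pi S^j(t))P^0(d\pi)$. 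This is precisely $1-D_t(i,j)$, and checking the initial condition at $t=0$ and differentiating confirms that it solves the ODE system above. The stationary statement is then immediate: as $t\to\infty$, term (ii) is bounded by $e^{-2b_\infty t}\to0$ while term (i) increases to $\E(e^{-2b_\infty\tau_{ij}})$ with the convention $e^{-\infty}=0$, whence $\lim_{t\to\infty}D_t(i,j)=1-\E(e^{-2b_\infty\tau_{ij}})$.

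The main obstacle is the middle step: rigorously establishing the duality between the forward partition-valued flow $P_t$ and the backward coalescing-walks-with-killing, i.e.\ justifying that coalescence of the ancestral lineages coincides with the first meeting time of the independent walks and that the identity-by-descent decomposition is exact. Once the closedness of the pairwise relations under $G$ is in hand, this becomes a finite-dimensional linear ODE matched to a standard Feynman--Kac formula, so the remaining effort is bookkeeping; the two scaling limits themselves are supplied verbatim by Theorems~\ref{IBMtoPBM} and~\ref{thm31}.
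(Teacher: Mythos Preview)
Your proof is correct and follows the same overall architecture as the paper --- first invoke Theorem~\ref{IBMtoPBM} for the $\gamma\to0$ limit, then Theorem~\ref{thm31} together with \eqref{eq:pmgp-to-distance} for the $\eps\to0$ limit, and finally identify $P_t(\{\pi:i\sim_\pi j\})$ --- but the identification step is carried out differently. The paper builds the graphical representation of the one-locus Moran model with generator $G$ (arrows for reproduction, stars for mutation), reads off the ancestral lineages as \emph{coalescing} random walks on this picture, and obtains the two-term decomposition by inspecting when two lineages are identical by descent. You instead compute $G\phi_{ij}$ directly, observe that the indicators $\{\phi_{ij}\}$ form a closed family under $G$, and recognise the resulting finite linear ODE for $q_t(i,j)$ as the Feynman--Kac equation for two \emph{independent} $\tilde M$-walks killed at rate $2b_\infty$ and absorbed on the diagonal. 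Both routes are valid; yours is more analytic and self-contained (no graphical construction is needed, and the passage from coalescing to independent walks is a non-issue since the formula only involves the pair up to $\tau_{ij}$), while the paper's construction makes the genealogical duality explicit at the level of sample paths and would extend more readily to joint statements about more than two subpopulations. The obstacle you flag --- matching coalescence with the meeting time of independent walks --- is in fact already bypassed by your own argument, since your ODE is derived without ever invoking coalescing lineages.
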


\begin{proof} We start by proving that, 
\begin{equation}
(d_t^\eps(i,j); t \ge 0) \underset{\eps \to 0}{\Longrightarrow} (D_t(i,j); t\ge 0) \ \textrm{  in the weak topology,}
\label{convergence:eps}
\end{equation}
where $(D_t(i,j); t\ge 0)$ is the deterministic process defined in Theorem \ref{thm-intro}.

From equation \eqref{eq:pmgp-to-distance} and Theorem \ref{thm31} we get that $\forall i,j \in E$,  $(d_t^{\eps}(i,j);  \ t\ge 0 )$ converges in distribution in the weak topology to $(1-P_t( \pi \in {\cal P}_d, i \sim_{\pi} j); \ t\ge 0 )$. It remains to show that this expression is identical to the one provided in Theorem \ref{thm-intro}. This is done in a standard way by using the graphical representation associated to the one-locus Moran model whose generator is specified by $G$ (defined in   \eqref{def:G-inf}).  It is well known that such a Moran model is encoded by a graphical representation that is generated by a sequence of independent Poisson Point Processes as follows: 
\begin{itemize}
\item $B^i$,  with intensity measure $b_\infty dt$, that corresponds to mutation events at site $i$. If $b^i \in B^i$, at  $(i, b^i)$ we draw a $\star$ in the graphical representation (Figure \ref{fig1}(a)). 
\item $T^{i,j}$,  with intensity measure  $\tilde{M}_{ji}dt$,  that corresponds to reproduction events, where $j$ is replaced by $i$. If $t^{i,j} \in T^{i,j}$, we draw an arrow from $(i,t^{i,j} )$ to  $(j, t^{i,j} )$ in the graphical representation to indicate that lineage $j$ inherits the type of lineage $i$.
\end{itemize}

\begin{figure}[h]
\begin{center}
\subfigure[Graphical representation of the Moran model]{\includegraphics[width=4cm]{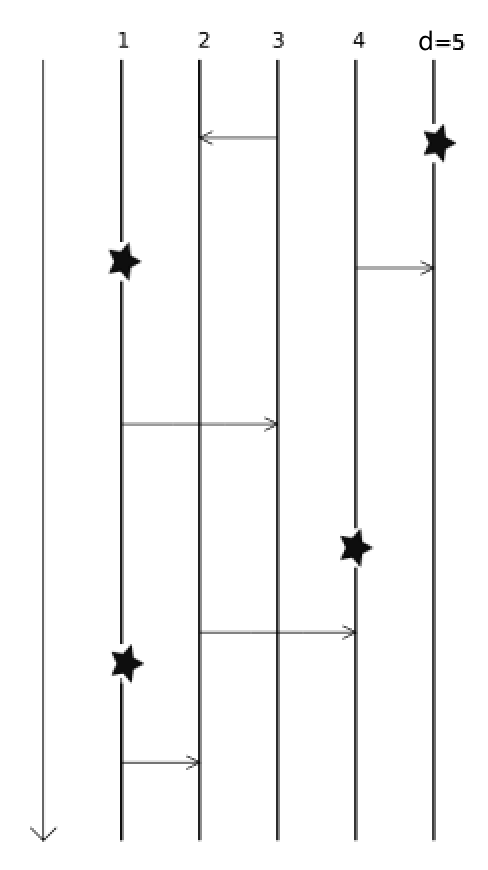}} \qquad
\subfigure[Genetic partitioning process]{\includegraphics[width=4cm]{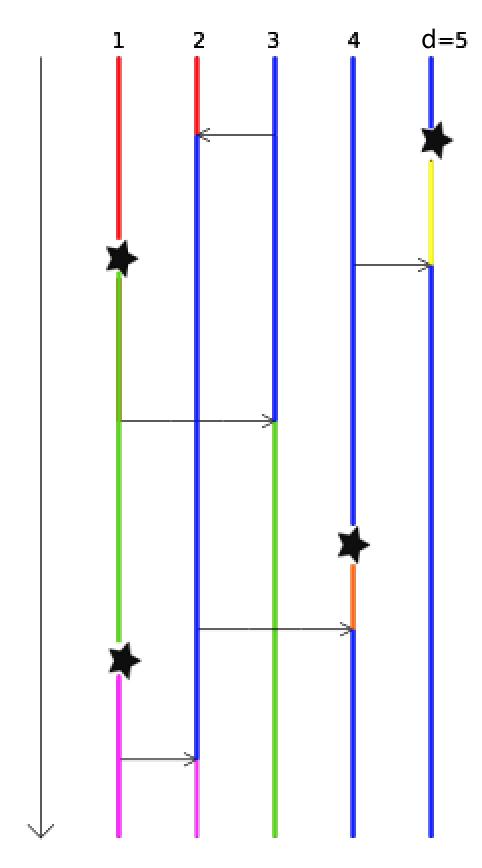}}\qquad
\subfigure[Ancestral lineages]{\includegraphics[width=4cm]{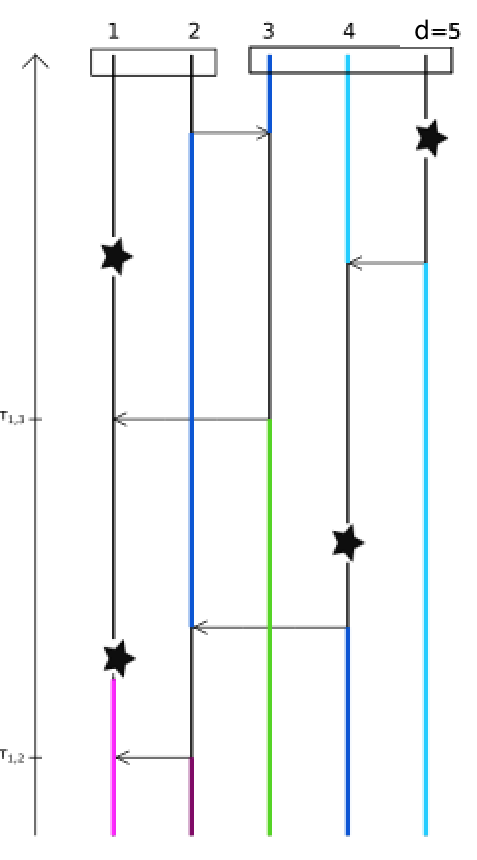}}
\caption{Realisation of the genetic partitioning model and its dual. In Figure (b), colours indicate genetic types (that induce the partitions). In Figure (c), colours represent the ancestral lineages.}
\label{fig1}
\end{center}
\end{figure}

We now give a characterisation of the dual process starting at $t$.  We define $(S^1_t, S^2_t,\dots, S^d_t)$ a sequence of piecewise continuous functions $[0, t] \to E$, where  $\forall  i \in E, \ S^i_t$ represents the ancestral lineage of individual $i$ (sampled at time $t$). $S^i_t(t) = i$ and as time proceeds backwards, each time $S^i_t$ encounters the tip of an arrow it jumps to the origin of the arrow. It is not hard to see that $S^i_t$ is distributed as a random walk started at $i$ and with transition rates from $k$ to $p$ equal to $\tilde{M}_{kp}$ and  that $(S^1_t, S^2_t,\dots, S^d_t)$ are distributed as coalescing random walks running backwards in time, i.e. they are independent when appart and become perfectly correlated when meeting each other. In Figure \ref{fig1}(c), $S^1_t, S^2_t,\dots, S^d_t$ are represented in different colours. 

Define $\tau^t_{ij} = \inf \{s \ge 0, \ S^i_t(t-s) =  S^j_t(t-s) \}$. By looking carefully at Figures \ref{fig1}(b) and \ref{fig1}(c), we let the reader convince herself that two individuals $i$ and $j$ have the same type at time $t$ iff: 
\begin{itemize}
\item[(i)] $\tau^t_{ij} \le t$ and  there are no $\star$ in the paths of $S^i_t$ and $S^j_t$ before $\tau^t_{ij}$, or

\item[(ii)] $\tau^t_{ij} \ge t$ and $S^i(0) \sim_{\pi_0} S^j(0)$ and  $S^i_t$ and $S^j_t$ their is no $\star$ in their paths, where $\pi_0$ is the initial genetic partition of the metapopulation, that is random partition of law $P^0$. 
\end{itemize}

From here, it is easy to check that:
\begin{eqnarray*}
D_t(i,j) & =& 1 - P_t(\{ \pi, i\sim_{\pi} j) \nonumber \\
& = &  \ 1 \ - \  \int_0^t e^{-2b_\infty s} \Pp(\tau_{ij} \in ds) ds \ - \ \int_{\pi} \ e^{-2b_\infty t} \Pp\left(\tau_{ij} > t, S^i(t) \sim_{\pi} S^j(t)\right) \ P^0(d\pi) . \nonumber 
\end{eqnarray*}

As $\forall i,j \in E$, $(D_t(i,j))$ is continuous, the fact that $(d_t^\eps(i,j))$ converges in distribution (in the weak topology) to $ (D_t(i,j))$  \eqref{convergence:eps} implies (by the continuous mapping theorem) that $(d_t^\eps(i,j))$ converges to $(D_t(i,j))$ in the sense of finite dimensional distributions, as $\eps \to 0$.

This result, combined with Theorem \ref{IBMtoPBM}, also implies that: $$\lim_{\eps \to 0} \lim_{\gamma \to 0 } (d^{\gamma, \eps}_{t/(\gamma \eps)}(i,j), \ t \ge 0) \ = \ D_t(i,j), \ t\ge 0)\  \textrm{ in the sense of finite dimensional distributions.}$$

The fact that $\lim_{t \to \infty} D_t(i,j) \ = \ 1 - \E(e^{-2 b_\infty \tau_{ij}})$  is a direct consequence of the definition of $(D_t(i,j); t\ge0)$ and the dominated convergence theorem.

This completes the proof of Theorem \ref{thm-intro}. \end{proof}

\section{An example: a population with a geographic bottleneck}
\label{sec:bottleneck}
Fix $d\in \N\setminus\{0\}$. We let ${\cal G}_1$ and ${\cal G}_2$ be two complete graphs of  $d$ vertices. We link the two graphs ${\cal G}_1$ and ${\cal G}_2$ by adding an extra edge $(v_1,v_2)$, where $v_k, k=1,2$ is a given vertex in ${\cal G}_k$. We call ${\cal G}$ the resulting graph. We equip ${\cal G}$ with the following migration rates: if $i$ is connected to $j$, then $M_{ij}=1/d$ (so that the emigration rate from any vertex $i$ is $1$ if $i\neq v_1,v_2$ and $1+\frac{1}{d}$ otherwise). We also assume that $N_i=1$, so that $\tilde M_{ij}=1/d$.

We think of ${\cal G}$ as  two well-mixed populations connected by a single geographic bottleneck. 

\begin{theorem}\label{thm:ex}
Fix $c>0$, $b_\infty = \frac{c}{d}$. Then for any two neighbours $i,j\in {\cal G}$
\begin{eqnarray*}
1-\E\left(\exp(-2b_\infty \tau_{ij})\right) \ = \left\{ \begin{array}{cc} \ \frac{c}{1+c} + o(1) & \mbox{if $i,j\in {\cal G}_1$, \ or if $i,j\in {\cal G}_2$} 
\\ 1-\frac{1}{d} + o(\frac{1}{d}) & \mbox{if $i=v_1$ and $j=v_2$.}
  \end{array} \right.
\end{eqnarray*}
\end{theorem}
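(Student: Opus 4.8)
The plan is to read $\E(e^{-2b_\infty \tau_{ij}})$ as the Laplace transform, evaluated at the fixed argument $2b_\infty = 2c/d$, of the meeting time of the pair of independent walks $(S^i,S^j)$ on $\mathcal G$, each traversing edges at rate $\tilde M_{kp}=1/d$. Writing $u(x,y):=\E_{x,y}(e^{-2b_\infty\tau})$, a Feynman--Kac argument (equivalently, a first-step decomposition of the pair chain) shows that $u$ is the unique bounded solution of $(\mathcal L - 2b_\infty)u = 0$ off the diagonal with $u\equiv 1$ on the diagonal, where $\mathcal L$ is the generator of the pair. Because $\mathcal G$ has a large automorphism group -- arbitrary permutations of the ordinary vertices inside each $\mathcal G_k$ together with the swap of the two sides -- $u(x,y)$ depends only on the \emph{type} of $(x,y)$. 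I would collapse the problem to the finitely many types: both walks on the same side at distinct ordinary vertices; same side with one walk at the bottleneck vertex; opposite sides with both walks at bottleneck vertices (this is the starting type for the second case); opposite sides with one walk at a bottleneck vertex; and opposite sides with both at ordinary vertices. This produces a fixed-size linear system whose coefficients depend on $d$ and $c$, to be analysed as $d\to\infty$.

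For the first case I would use a clean exponential approximation. On a single complete graph $K_d$, from \emph{any} pair of distinct vertices the two walks coalesce at the constant rate $2/d$ (each lands on the other's site at rate $1/d$), while every other move keeps them distinct; hence the meeting time is exactly $\mathrm{Exp}(2/d)$ and $\E(e^{-2b_\infty\tau}) = \frac{2/d}{2/d+2c/d}=\frac{1}{1+c}$. On $\mathcal G$ this constant-rate picture is perturbed only while a walk sits at a bottleneck vertex, which occurs a fraction $O(1/d)$ of the time and affects only the crossing behaviour on the negligible timescale $d^2$. Since the two walks start on the same side, with overwhelming probability they meet (timescale $d/2$) well before either crosses the bottleneck (timescale $d^2$), so a coupling with the $K_d$ meeting clock yields $\E(e^{-2b_\infty\tau_{ij}}) = \frac{1}{1+c}+o(1)$, i.e. $1-\E(e^{-2b_\infty\tau_{ij}}) = \frac{c}{1+c}+o(1)$.

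For the second case ($i=v_1$, $j=v_2$) the meeting clock is no longer of constant rate: walks on opposite sides cannot meet, so meeting requires a crossing of the single bottleneck edge. Three timescales now compete -- fast within-side mixing ($O(1)$), killing by the discount ($O(d/c)$), and diffusive bottleneck crossing ($O(d^2)$). Since killing is far faster than a typical crossing, every pathway that forces the walks to first diffuse away and later return to the bottleneck is suppressed, and the dominant contribution comes from meetings that happen essentially at once while the two walks are still adjacent across the single edge. I would make this quantitative by eliminating the same-side types from the reduced system -- their values are $\frac{1}{1+c}+O(1/d)$ by the first case -- which closes the opposite-side equations, and then run a dominant-balance expansion in powers of $1/d$ to extract the leading $\Theta(1/d)$ term of $\E(e^{-2b_\infty\tau})$ and hence the stated $1-\tfrac1d+o(1/d)$ behaviour of the genetic distance.

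The main obstacle is precisely this last asymptotic bookkeeping. The near-bottleneck ``immediate'' meeting contributes at order $1/d$, but so do several competing crossing-then-coalescence pathways, and one must match the $O(1/d)$ crossing probabilities against the $O(1/d)$ killing and coalescence rates in the reduced chain to determine the exact coefficient of $1/d$; this is the error-prone step and the place where the separation of the three timescales has to be used most carefully. I would therefore organize the argument as a systematic $1/d$-expansion of the fixed-size linear system, feeding in the first-case values for the same-side types and verifying term by term that the opposite-side equations balance at the claimed order, which is where the bulk of the technical work lies.
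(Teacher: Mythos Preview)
Your proposal is correct and follows essentially the same approach as the paper: set up the linear system for $a(i,j)=\E(e^{-2b_\infty\tau_{ij}})$ via a first-step decomposition, use the symmetries of $\mathcal G$ to collapse it to the five types $(II),(IB),(I\bar I),(I\bar B),(B\bar B)$, and extract the asymptotics as $d\to\infty$. The only minor organisational difference is that you first handle the same-side case by a direct coupling with the $K_d$ meeting clock and then feed those values into the opposite-side equations, whereas the paper simply solves the full $5\times 5$ system at once and reads off both asymptotics; either route works.
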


Before going into the details of the proof, we present some heuristics for the formulae. 
First, if $i$ and $j$ belong to the same subgraph, we can assume that most of the time the two random walks hit each other before hitting the other subgraph. So we can consider  $\tilde S^i$ and $\tilde S^j$, two random walks on a complete graph with $d$ vertices. If $J$ is the number of jumps made by $\tilde S^i$ and $\tilde S^j$ before hitting each other, $J$ follows a geometric distribution of parameter $1/d$, which properly renormalized converges to  an exponential distribution of parameter $1$ (when $d\gg1$). In addition, the mean time between two consecutive jumps (of $S^i$ or $S^j$) is $2 \times 1/d$ so the distribution of $\tau_{ij}$ can be approximated by an exponential distribution of parameter $2/d$. If $e$ is an exponentially distributed random variable, with parameter $2/d$, then  $\E\left(\exp(-2b_\infty e)\right) = 1/(1+ 2b_\infty/d) = 1/(1+c)$ which gives the desired result. 
Second, $i=v_1$ and $j=v_2$, with  probability $1/(d+1) \simeq 1/d$, the first jump is from $v_1$ to $v_2$ (or $v_2$ to $v_1$), so the two random walks hit very fast, and the genetic distance is close to $0$. Otherwise, each random walk ``gets lost'' in its subgraph and the hitting time becomes very large. In that case, the genetic distance is approximatively $1$. 

\bigskip

\begin{proof} We give a brief sketch of the computations since the method is rather standard. We start with some general considerations. Consider a general meta-population with $\bar d$ subpopulations. 
Define $a(i,j) \ = \ \E\left(\exp(-2b_\infty \tau_{ij})\right)$. By conditioning on every possible move of the two walks on the small time interval $[0,dt]$, it is not hard to show that the $a(i,j)$'s satisfy the following system of linear equations: $\forall i\in\{1,\dots,\bar d\}, \ a(i,i)=1$ and
$\forall i,j\in\{1,\dots,\bar d\}$ with $i\neq j$:
\be\label{eq:general-tauij}
0 \ = \ \sum_{k=1}^{\bar d} \left( a(k,j) \tilde M_{ik} + a(i,k) \tilde M_{jk} \right) \ - \ a(i,j)\left(\sum_{k=1}^{\bar d} (\tilde M_{ik}+\tilde M_{kj}) + 2b_\infty \right). 
\ee
Let us now go back to our specific case (in particular $\bar d=2d$).
We distinguish between two types of points: the boundary points (either $v_1$ or $v_2$), and the interior points of the subgraphs ${\cal G}_1$ and ${\cal G}_2$ (points that are distinct from $v_1$ and $v_2$).
For $(i,j)$, with $i\neq j$, we say that $(i,j)$ is of type
\begin{itemize}
\item $(II)$ if the vertices belong to the interior of the same subgraph (either ${\cal G}_1$ or ${\cal G}_2$).
\item $(I\bar I)$ if the vertices belong to the interior of distinct subgraphs.
\item $(IB)$  if one of the vertex is in the interior of a subgraph, and the other vertex belongs to the boundary point
of the same subgraph.
\item $(I \bar B)$, $(B \bar B)$ are defined analogously.
\end{itemize}
By symmetry, $a(i,j)$ is invariant in each of those classes of pairs of points. We denote by $a(II)$ the value of $a(i,j)$ for $(i,j)$ in $(II)$.
$a(I \bar I), a(IB), a(I\bar B), a(B\bar B)$ are defined analogously. From this observation, we can inject those quantities in (\ref{eq:general-tauij}): this reduces the dimension of the linear problem from $\bar d(\bar d-1)$
to only $5$. The system can then be solved explicitly and straightforward asymptotics  yield Theorem \ref{thm:ex}.

\end{proof}

\section*{Acknowledgements}
We would like to thank the Editor and the reviewers for their useful comments on the previous version of the manuscript.
We also thank Florence D\'ebarre and  Amaury Lambert for helpful discussions.

\bibliographystyle{elsarticle-harv} 
    \bibliography{BiblioGeneral.bib}

\end{document}